\theoremstyle{thmstyleone}%
\newtheorem{theorem}{Theorem}%  meant for continuous numbers
\newtheorem{proposition}[theorem]{Proposition}% 
\theoremstyle{thmstyletwo}%
\newtheorem{example}{Example}%
\theoremstyle{thmstylethree}%
\newtheorem{definition}{Definition}%
\newcommand\numberstyle[1]{%
	\footnotesize
	\color{SQLcodegray}%
	\ttfamily
	\ifnum#1<10 0\fi#1 |%
}
\ALC@linenosize \arabic{ALC@line}\ALC@linenodelimiter}
\newcommand\setAlgoLinenoFormat{\renewcommand*{\theALC@line}{\ifnum\value{ALC@line}<10 0\fi\arabic{ALC@line}}}
\let\oldalgorithmic\algorithmic
\renewcommand\algorithmic{\ttfamily\fontseries{l}\selectfont\oldalgorithmic}
\definecolor{aliceblue}{rgb}{0.94, 0.97, 1.0}
\definecolor{skyblue}{rgb}{0.53, 0.81, 0.92}
\definecolor{brightmaroon}{rgb}{0.95,0.82,0.85}
\definecolor{SQLCodeGreen}{rgb}{0,0.6,0}
\definecolor{SQLcodegray}{rgb}{0.5,0.5,0.5}
\definecolor{SQLCodePurple}{HTML}{C42043}
\definecolor{SQLBackgroundcolor}{HTML}{F2F2F2}
\definecolor{SQLBookColor}{cmyk}{0,0,0,0.90}  
\lstdefinestyle{SQLStyle} {
	backgroundcolor=\color{SQLBackgroundcolor},
	commentstyle=\color{SQLCodeGreen},
	keywordstyle=\color{SQLCodePurple},
	numberstyle=\numberstyle,
	stringstyle=\color{SQLCodePurple},
	basicstyle=\footnotesize\ttfamily,
	breakatwhitespace=false,
	breaklines=true,
	captionpos=b,
	keepspaces=true,
	numbers=left,
	numbersep=10pt,
	showspaces=false,
	showstringspaces=false,
	showtabs=false,
	autogobble=true,
	literate = 	{é}{{\'e}}{1}%
	{è}{{\`e}}{1}%
	{à}{{\`a}}{1}%
	{â}{{\^a}}{1}%%%
	{ç}{{\c{c}}}{1}%
	{œ}{{\oe}}{1}%
	{ù}{{\`u}}{1}%
	{É}{{\'E}}{1}%
	{È}{{\`E}}{1}%
	{À}{{\`A}}{1}%
	{Ç}{{\c{C}}}{1}%
	{Œ}{{\OE}}{1}%
	{Ê}{{\^E}}{1}%
	{ê}{{\^e}}{1}%
	{î}{{\^i}}{1}%
	{ï}{{\"i}}{1}%%%
	{ô}{{\^o}}{1}%
	{û}{{\^u}}{1}%
}
\newtheoremstyle{theoremStyle}
{9pt}{9pt}				% Espace avant et après
{}						% Fonte et style corps
{}						% Retrait titre
{\bfseries}{ -}			% Fonte titre et code supplémentaire
{ }						% Espace apres titre et code supplémentaire
{}						% Formatage Titre
\theoremstyle{theoremStyle}
\newtheorem{proposition}{Proposition}[section]
\newtheorem{lemma}[proposition]{Lemme}
\newtheorem{corollaire}[proposition]{Corollaire}
\newtheorem{theorem}[proposition]{Théorème}
\newtheorem{lemma}{Lemme}
\newtheorem{corollaire}{Corollaire}
\newtheorem{theorem}{Théorème}
\newtheorem{definition}{Definition}[section]
\newtheorem{example}{Exemple}[section]
\begin{document}
	
	\title[Emerging Skycube]{Emerging Skycube\footnote{This is the author’s accepted manuscript of an article published in Knowledge and Information Systems (KAIS), Springer, 2025. The final version is available at: 10.1007/s10115-024-02320-2.}}
	
	%%=============================================================%%
	%% GivenName	-> \fnm{Joergen W.}
	%% Particle	-> \spfx{van der} -> surname prefix
	%% FamilyName	-> \sur{Ploeg}
	%% Suffix	-> \sfx{IV}
	%% \author*[1,2]{\fnm{Joergen W.} \spfx{van der} \sur{Ploeg} 
	%%  \sfx{IV}}\email{iauthor@gmail.com}
	%%=============================================================%%
	
	\author*[1]{\fnm{Mickaël} \sur{Martin Nevot}}\email{mickael.martin-nevot@univ-amu.fr}
	
	\affil*[1]{\orgdiv{Laboratoire d'Informatique et Système}, \orgname{CRNS UMR 7020}, \orgaddress{\street{Aix-Marseille Université 52 AVENUE ESCADRILLE NORMANDIE NIEMEN}, \city{Marseille Cedex 20}, \postcode{13397}, \country{FRANCE}}}
	
	%%==================================%%
	%% Sample for unstructured abstract %%
	%%==================================%%
	
	\abstract{Combining multi-criteria decision analysis and trend reversal discovery make it possible to extract globally optimal, or non-dominated, data in relation to several criteria, and then to observe their evolution according to a decision-making property. Thus, we introduce Emerging Skycube, a concept associating Skycube and emerging datacube. As far as we know, no DBMS-integrated solution exists to compute an emerging Skycube, and hence taking advantage of ROLAP analysis tools. An emerging datacube has only one measure: we propose to use several to comply to multi-criteria decision analysis constraints which requires multiple attributes. A datacube is expensive to compute. An emerging datacube is about twice as expensive. On the other hand, an emerging Skycube is cheaper as the trend reversal is computed after two Skycube calculations, which considerably reduces the relation volume in comparison with the initial one. It is possible to save even more computing time and storage space. To this end, we propose two successive reductions. First, a Skycube lossless partial materialisation using Skylines concepts lattice, based on the agree concepts lattice and partitions lattice. Then, either the closed emerging Skycube for an information-loss reduction, or the closed emerging L-Skycube for a smaller but lossless reduction.}

	\keywords{video games data mining, closed datacubes, emerging datacubes, Skycube, concepts lattice, Agree set}
	
	%%\pacs[JEL Classification]{D8, H51}
	
	%%\pacs[MSC Classification]{35A01, 65L10, 65L12, 65L20, 65L70}
	
	\maketitle
	
	%---------------------------------------------------
	
	\section{Introduction and motivations}
	
	In this paper, we introduce the emerging Skycube and the closed emerging Skycube. The goal is to combine the multicriteria approach for preference analysis with the OLAP approach dedicated to the trend reversal discovery. The emerging Skycube extracts optimal data according to several characteristics and according to their significant evolution in regard to a property.
	
	We originally intended to reuse the emerging datacube concept (\cite{nedjarEmergingCubesTrends2007}), based on the datacube (\cite{grayDataCubeRelational1996}), by combining our formalization of the $\mathbb{C}$-IDEA closed emerging datacube algorithm, in \cite{martinnevotCIdeaFastAlgorithm2019}, with the Skycube concept (\cite{yuanEfficientComputationSkyline2005, peiCatchingBestViews2005}). We have kept our initial idea but we extended it to the whole IDEA algorithmic platform (\cite{nedjarExtractingSemanticsOLAP2011, nedjarEmergingDataCube2013}) while integrating a reduced Skycube representation based on the Skylines concepts lattice.
	
	After introducing related work, the use case we experimented on and Skyline and Skycube concepts, we present the partitions lattice, then the Agree concepts lattice, and the Skylines concepts lattice in order to propose a Skycube lossless partial materialisation. Next, we will transpose the computation of the emerging datacube, of its L border, and of the closed emerging datacube, respectively, to the emerging Skycube, to its L border and to the closed emerging Skycube.
	
	\section{Related work}
	
	The Skyline operator (\cite{borzsonySkylineOperator2001}) aim to fetch the most relevant elements in a database context. It comes from the maximal vector problem (\cite{barndorff-nielsenDistributionNumberAdmissible1966, bentleyAverageNumberMaxima1978}). However specific algorithms have been developed to comply with context specificities, with or without classical or spatial indexes, expensive to update.
	
	Furthermore, various approaches to parallelize Skyline's computation (\cite{boghWorkefficientParallelSkyline2015, lougmiriNewProgressiveMethod2017, wangScalableSpatialSkyline2018}) have been defined, especially the outstanding SkyCell (\cite{liSkyCellSpacePruningBased2021}), a parallel Skyline algorithm based on the spatial search on a discretized grid.
	
	Skyline being a decision making tool, the user will likely compute several Skylines before choosing the preferred one. To address such an issue, the Skycube concept (\cite{yuanEfficientComputationSkyline2005, peiCatchingBestViews2005}) has been proposed. As the underlying idea is to precompute all Skylines, it is essential to reduce the outcome's storage cost.
	
	A first reduction approach has been proposed by \cite{peiMultidimensionalSubspaceSkyline2006}. It was meant to determine if an item belongs to several Skycuboids. The solution proposes a Skycube representation based on a formal concept analysis where each node is a couple composed of an equivalence class (object set) and subspaces in which it belongs to the Skyline. Having the node count bounded by the tuple power set ($|\mathscr{P}(Tid(r))|$) lattice cardinality is the main drawback of this value oriented approach. As Skyline objects have priority, a large count of lattice nodes have to be accessed to rebuilt a Skycuboid. In attribute oriented approaches like ours, the node count is way smaller since it is bounded by $|\mathscr{P}(\mathcal{C})|$. 
	
	\cite{xiaRefreshingSkyCompressed2006} proposed a Skycuboid oriented approach (attribute oriented). This solution proposes a heavy reduction of the Skycube by pruning items considered as redundant from all Skycuboids. Its main drawback is the Skycuboid's computation, tricky and expansive, requiring to search through the whole data structure to fetch all redundant items. Moreover, in contrast with \cite{peiMultidimensionalSubspaceSkyline2006, peiComputingCompressedMultidimensional2007} approaches and ours, this one is not based on a theoretical concept as sound as the formal concept analysis. However, one of the main interest of this approach, besides the great storage cost reduction, is the data updating efficiency.
	
	Despite the presented approach's main goal, to reduce Skycuboids' rebuilding cost, it is still a good improvement on the data updating and storage reduction.
	
	\section{Use case and overview}
	
	In this paper, we introduce an example applied to a video game: Pokémon Showdown! and used to illustrate introduced concepts\footnote{For more information, see appendices}.
	
	\subsection{Relation example}
	
	\begin{example}
		The relation example \texttt{Pokémon} (cf. table~\ref{tab:relation_exemple_4}) preserve, from our experimentation, the \texttt{Tier} attributes of the strategic tier (cf paragraph~\ref{sssec:tier_strategique_de_pokemon_showdown}) of the play\footnote{Since it is possible to have battles with multiple tier teams, we consider the strategic tier of a battle as the lowest strategic tier of the strategic tier set played in the battle. Example, a battle between the Pokémon list $121, 113, 006 (A)$, Starmie (OU), Chansey (OU), Charizard (UU) is a strategic tier UU.}, \texttt{Player}, the Pokémon sequence played by the player, and \texttt{Opponent}, the Pokémon sequence of the opponent. The property \texttt{Rank} categorises players in two classes, newbies ($N$) and experts ($E$), depending of the Elo ranking system of the Pokémon Showdown! ladder (cf. subsubsection~\ref{ssec:ladder_de_pokemon_showdown}). The choice criteria to decide the "best apparition order for Pokémon is a battle" are Pokémon \texttt{Rarity} in the sequence, battle \texttt{Duration} (in turn count) and the \texttt{Loss} rate of the player's sequence against the opponent's sequence. The preference of all these criteria is maximum. The attribute \texttt{Rarity} is invariable from one player category to another. During data aggregation, we merge mirror battles, which are not very relevant due to their symmetrical data characteristics, and for the set, we limit to three Pokémon sequences, for both players. So as to ease representations, we show them in lists of Pokémon numbers, and we affect a letter to identify them. Therefore, the list $121, 113, 006 (A)$ identify the three-Pokémon sequence Starmie, Chansey, Charizard. For more clarity, we round the battle duration and loss rate to the nearest 5-multiple.
	\end{example}
	
	\begin{table}[htbp]
		\caption{\texttt{Pokémon} relation example}\label{tab:relation_exemple_4}
		\centering
		\begin{minipage}{\linewidth}
			%\resizebox{1\textwidth}{!}{
				\begin{tabular}{c|cccc|ccc} \toprule
					\texttt{RowId} & \texttt{Tier}\footnote{Strategic Tier.} & \texttt{Player}\footnote{With the official Pokémon numbers (and their strategic tier): \begin{itemize}\item n°006: Charizard (UU); \item n°040: Wigglytuff (UU); \item n°065: Alakazam (OU); \item n°080: Slowbro (OU); \item n°103: Exeggutor (OU); \item n°113: Chansey (OU); \item n°121: Starmie (OU); \item n°143: Snorlax (OU).\end{itemize}} & \texttt{Opponent}\footnote{\emph{idem supra}.} & \texttt{Rank}\footnote{Players category, with only two partitions considered: novice (N) and expert (E).} & \texttt{Rarity}\footnote{Let $p$ be the percentage of drop for the Pokémon sequence (which is the multiplication of the percentage of drop for each Pokémon of the sequence), the \texttt{Rarity} score $r$ is calculated, on a scale of 0 to 10, as such: $\text{if } p = 1 \text{ then } r = 0 \text{, else } r = \lfloor max(\frac{(p - 1) \times 10 - (100 - e \times 0.9)}{e}, 0) \rfloor + 1 $.} & \texttt{Duration}\footnote{In total number of turns in the fight.} & \texttt{Loss}\footnote{In percentage.} \\
					\midrule
					$1$  & $UU$ & $121, 113, 006 (A)$ & $065, 113, 143 (D)$ & $N$ & $5$ & $25$ & $30$ \\
					$2$  & $OU$ & $065, 103, 065 (B)$ & $065, 040, 065 (E)$ & $N$ & $4$ & $65$ & $50$ \\
					$3$  & $OU$ & $065, 103, 065 (B)$ & $121, 113, 121 (F)$ & $N$ & $4$ & $35$ & $40$ \\
					$4$  & $OU$ & $065, 103, 065 (B)$ & $121, 113, 006 (A)$ & $N$ & $4$ & $85$ & $40$ \\
					$5$  & $OU$ & $121, 113, 080 (C)$ & $121, 113, 006 (A)$ & $N$ & $1$ & $95$ & $60$ \\
					$6$  & $OU$ & $121, 113, 080 (C)$ & $065, 103, 065 (B)$ & $N$ & $1$ & $35$ & $50$ \\
					$7$  & $OU$ & $065, 113, 143 (D)$ & $065, 103, 065 (B)$ & $N$ & $9$ & $85$ & $60$ \\
					$8$  & $OU$ & $065, 113, 143 (D)$ & $121, 113, 080 (C)$ & $N$ & $9$ & $85$ & $70$ \\
					$9$  & $UU$ & $065, 040, 065 (E)$ & $065, 113, 143 (D)$ & $N$ & $7$ & $25$ & $50$ \\
					$10$ & $UU$ & $065, 040, 065 (E)$ & $065, 040, 065 (E)$ & $N$ & $7$ & $65$ & $30$ \\
					$11$ & $UU$ & $121, 113, 006 (A)$ & $065, 113, 143 (D)$ & $E$ & $5$ & $20$ & $30$ \\
					$12$ & $OU$ & $065, 103, 065 (B)$ & $065, 040, 065 (E)$ & $E$ & $4$ & $60$ & $45$ \\
					$13$ & $OU$ & $065, 103, 065 (B)$ & $121, 113, 121 (F)$ & $E$ & $4$ & $30$ & $30$ \\
					$14$ & $OU$ & $065, 103, 065 (B)$ & $121, 113, 006 (A)$ & $E$ & $4$ & $80$ & $50$ \\
					$15$ & $OU$ & $121, 113, 080 (C)$ & $121, 113, 006 (A)$ & $E$ & $1$ & $90$ & $70$ \\
					$16$ & $OU$ & $121, 113, 080 (C)$ & $065, 103, 065 (B)$ & $E$ & $1$ & $30$ & $30$ \\
					$17$ & $OU$ & $065, 113, 143 (D)$ & $065, 103, 065 (B)$ & $E$ & $9$ & $80$ & $50$ \\
					$18$ & $OU$ & $065, 113, 143 (D)$ & $121, 113, 080 (C)$ & $E$ & $9$ & $90$ & $70$ \\
					$19$ & $UU$ & $065, 040, 065 (E)$ & $065, 113, 143 (D)$ & $E$ & $7$ & $20$ & $30$ \\
					$20$ & $UU$ & $065, 040, 065 (E)$ & $065, 040, 065 (E)$ & $E$ & $7$ & $60$ & $45$ \\
					\bottomrule
				\end{tabular}
				%}
		\end{minipage}
	\end{table}
	
	\begin{figure}[htbp]
		\begin{minipage}{\linewidth}
			\centering
			\includegraphics[width=\linewidth]{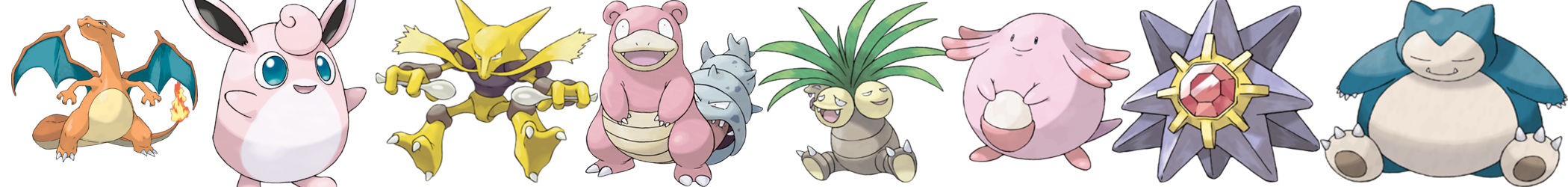}
			\caption[Charizard, Wigglytuff, Alakazam, Slowbro, Exeggutor, Chansey, Starmie, Snorlax]{The Pokémon of the use case\footnote{From left to right: Charizard (006, UU), Wigglytuff (040, UU), Alakazam (n°065, OU), Slowbro (n°080, OU), Exeggutor (n°103, OU), Chansey (n°113, OU), Starmie (n°121, OU), Snorlax (n°143, OU); illustration of Pokémon FireRed Version and LeafGreen on Poképédia. }}\label{fig:alakazam_Slowbro_Exeggutor_Chansey_Starmie_tauros_Snorlax}
		\end{minipage}
	\end{figure}
	
	\begin{table}[htbp]
		\caption{$\texttt{Pokémon}_{1}$ relation example, novice players}\label{tab:relation_exemple_4_1}
		\centering
		\begin{minipage}{\linewidth}
			\centering
			%\resizebox{1\textwidth}{!}{
				\begin{tabular}{c|cccc|ccc} \toprule
					\texttt{RowId} & \texttt{Tier}\footnote{Strategic Tier.} & \texttt{Player} & \texttt{Opponent} & \ldots & \texttt{Rarity}\footnote{Let $p$ be the percentage of drop for the Pokémon sequence (which is the multiplication of the percentage of drop for each Pokémon of the sequence), the \texttt{Rarity} score $r$ is calculated, on a scale of 0 to 10, as such: $\text{if } p = 1 \text{ then } r = 0 \text{, else } r = \lfloor max(\frac{(p - 1) \times 10 - (100 - e \times 0.9)}{e}, 0) \rfloor + 1 $.} & \texttt{Duration}\footnote{In total number of turns in the fight.} & \texttt{Loss}\footnote{In percentage.} \\
					\midrule
					$1$  & $UU$ & $A$ & $D$ & \ldots & $5$ & $25$ & $30$ \\
					$2$  & $OU$ & $B$ & $E$ & \ldots & $4$ & $65$ & $50$ \\
					$3$  & $OU$ & $B$ & $F$ & \ldots & $4$ & $35$ & $40$ \\
					$4$  & $OU$ & $B$ & $A$ & \ldots & $4$ & $85$ & $40$ \\
					$5$  & $OU$ & $C$ & $A$ & \ldots & $1$ & $95$ & $60$ \\
					$6$  & $OU$ & $C$ & $B$ & \ldots & $1$ & $35$ & $50$ \\
					$7$  & $OU$ & $D$ & $B$ & \ldots & $9$ & $85$ & $60$ \\
					$8$  & $OU$ & $D$ & $C$ & \ldots & $9$ & $85$ & $70$ \\
					$9$  & $UU$ & $E$ & $D$ & \ldots & $7$ & $25$ & $50$ \\
					$10$ & $UU$ & $E$ & $E$ & \ldots & $7$ & $65$ & $30$ \\
					\bottomrule
				\end{tabular}
				%}
		\end{minipage}
	\end{table}
	
	\begin{table}[htbp]
		\caption{$\texttt{Pokémon}_{2}$ relation example, expert players}\label{tab:relation_exemple_4_2}
		\centering
		\begin{minipage}{\linewidth}
			\centering
			%\resizebox{1\textwidth}{!}{
				\begin{tabular}{c|cccc|ccc} \toprule
					\texttt{RowId} & \texttt{Tier}\footnote{Tier strategic.} & \texttt{Player} & \texttt{Opponent} & \ldots & \texttt{Rarity}\footnote{Let $p$ be the percentage of drop for the Pokémon sequence (which is the multiplication of the percentage of drop for each Pokémon of the sequence), the \texttt{Rarity} score $r$ is calculated, on a scale of 0 to 10, as such: $\text{if } p = 1 \text{ then } r = 0 \text{, else } r = \lfloor max(\frac{(p - 1) \times 10 - (100 - e \times 0.9)}{e}, 0) \rfloor + 1 $.} & \texttt{Duration}\footnote{In total number of turns in the fight.} & \texttt{Loss}\footnote{In percentage.} \\
					\midrule
					$1$  & $UU$ & $A$ & $D$ & \ldots & $5$ & $20$ & $30$ \\
					$2$  & $OU$ & $B$ & $E$ & \ldots & $4$ & $60$ & $45$ \\
					$3$  & $OU$ & $B$ & $F$ & \ldots & $4$ & $30$ & $30$ \\
					$4$  & $OU$ & $B$ & $A$ & \ldots & $4$ & $80$ & $50$ \\
					$5$  & $OU$ & $C$ & $A$ & \ldots & $1$ & $90$ & $70$ \\
					$6$  & $OU$ & $C$ & $B$ & \ldots & $1$ & $30$ & $30$ \\
					$7$  & $OU$ & $D$ & $B$ & \ldots & $9$ & $80$ & $50$ \\
					$8$  & $OU$ & $D$ & $C$ & \ldots & $9$ & $90$ & $70$ \\
					$9$  & $UU$ & $E$ & $D$ & \ldots & $7$ & $20$ & $30$ \\
					$10$ & $UU$ & $E$ & $E$ & \ldots & $7$ & $60$ & $45$ \\
					\bottomrule
				\end{tabular}
				%}
		\end{minipage}
	\end{table}
	
	\begin{example}
		Assuming we are looking for trend reversals occuring between \texttt{Rank} novice players' strategies and those of \texttt{Rank} expert players of the \texttt{Pokémon} relation (cf. table~\ref{tab:relation_exemple_4}). the $\texttt{Pokémon}_{1}$ relation (cf. table~\ref{tab:relation_exemple_4_1}) shows the novice players and $\texttt{Pokémon}_{2}$ (cf. table~\ref{tab:relation_exemple_4_2}) fetches tuples from \texttt{Pokémon} where $\texttt{Rank} = 'E'$. The deduplicated merged relation $\texttt{Pokémon}_{Mer}^+$ (cf. table~\ref{tab:relation_fusionnee_dedoublonnee_pokemon}) is given, for now, as a rough guide (with \texttt{R} for \texttt{Rarity}, \texttt{D}$_1$ for \texttt{Duration}$_1$, \texttt{L}$_1$ for \texttt{Loss}$_1$, \texttt{D}$_2$ for \texttt{Duration}$_2$ and \texttt{L}$_2$) for \texttt{Loss}$_2$.
	\end{example}
	
	\begin{table}[htbp]
		\caption{The deduplicated merged relation $\texttt{Pokémon}_{Mer}^+$}\label{tab:relation_fusionnee_dedoublonnee_pokemon}
		\centering
		\begin{minipage}{\linewidth}
			\centering
			%\resizebox{1\textwidth}{!}{
				\begin{tabular}{c|cccc|ccccc} \toprule
					\texttt{RowId} & \texttt{Tier}\footnote{Tier strategic.} & \texttt{Player} & \texttt{Opponent} & \ldots & \texttt{R}\footnote{Let $p$ be the percentage of drop for the Pokémon sequence (which is the multiplication of the percentage of drop for each Pokémon of the sequence), the \texttt{R} score $r$ is calculated, on a scale of 0 to 10, as such: $\text{if } p = 1 \text{ then } r = 0 \text{, else } r = \lfloor max(\frac{(p - 1) \times 10 - (100 - e \times 0.9)}{e}, 0) \rfloor + 1 $.} & \texttt{D}$_1$\footnote{In total number of turns in the fight.} & \texttt{L}$_1$\footnote{In percentage.} & \texttt{D}$_2$\footnote{\emph{idem} \texttt{D}$_1$.} & \texttt{L}$_2$\footnote{\emph{idem} \texttt{L}$_1$.} \\
					\midrule
					$1$  & $UU$ & $A$ & $D$ & \ldots & $5$ & $25$ & $30$ & $20$ & $30$ \\
					$2$  & $OU$ & $B$ & $E$ & \ldots & $4$ & $65$ & $50$ & $60$ & $45$ \\
					$3$  & $OU$ & $B$ & $F$ & \ldots & $4$ & $35$ & $40$ & $30$ & $30$ \\
					$4$  & $OU$ & $B$ & $A$ & \ldots & $4$ & $85$ & $40$ & $80$ & $50$ \\
					$5$  & $OU$ & $C$ & $A$ & \ldots & $1$ & $95$ & $60$ & $90$ & $70$ \\
					$6$  & $OU$ & $C$ & $B$ & \ldots & $1$ & $35$ & $50$ & $30$ & $30$ \\
					$7$  & $OU$ & $D$ & $B$ & \ldots & $9$ & $85$ & $60$ & $80$ & $50$ \\
					$8$  & $OU$ & $D$ & $C$ & \ldots & $9$ & $85$ & $70$ & $90$ & $70$ \\
					$9$  & $UU$ & $E$ & $D$ & \ldots & $7$ & $25$ & $50$ & $20$ & $30$ \\
					$10$ & $UU$ & $E$ & $E$ & \ldots & $7$ & $65$ & $30$ & $60$ & $45$ \\
					\bottomrule
				\end{tabular}
				%}
		\end{minipage}
	\end{table}
	
	\begin{example}
		The deduplicated merged relation of the example relation is given in the table~\ref{tab:relation_fusionnee_dedoublonnee_pokemon} (with \texttt{R} for \texttt{Rarity}, \texttt{D}$_1$ and \texttt{L}$_1$ respectively matching attributes \texttt{Duration} and \texttt{Loss} of novice players and \texttt{D}$_2$ et \texttt{L}$_2$ respectively matching attributes \texttt{Duration} et \texttt{Loss} of expert players).
	\end{example}
	
	A Skycube can be represented as a lattice like the one used for the novice players datacube (cf. figure~\ref{fig:skycube_treillis_complet_3_1}) or the expert one (cf. figure~\ref{fig:skycube_treillis_complet_3_2}). The Skycube cuboids, called Skycuboids, are grouped by their criteria count, by level, from the bottom, Skycuboids with only one criteria (ignoring $\emptyset$), to the top, presenting the Skycuboid with all the considered criteria.
	
	\begin{figure}[htbp]
		\centering
		\tiny
		\resizebox{1\textwidth}{!}{
			\begin{tikzpicture}[
				line join=bevel
				]
				
				%% Nodes
				% Niveau 0
				\node (bottom) at (150pt, 0pt) {$\emptyset$};
				% Niveau 1
				\node (n11) at (50pt, 75pt)
				{
					\begin{tabular}{c|c}
						\toprule
						\texttt{Id} & \texttt{R} \\
						\midrule
						$5$ & $1$ \\
						$6$ & $1$ \\
						\bottomrule
					\end{tabular}
				};
				\node (n12) at (150pt, 75pt)
				{
					\begin{tabular}{c|c}
						\toprule
						\texttt{Id} & \texttt{D} \\
						\midrule
						$1$ & $25$ \\
						$9$ & $25$ \\
						\bottomrule
					\end{tabular}
				};
				\node (n13) at (250pt, 75pt)
				{
					\begin{tabular}{c|c}
						\toprule
						\texttt{Id} & \texttt{L} \\
						\midrule
						$1$ & $30$ \\
						$10$ & $30$ \\
						\bottomrule
					\end{tabular}
				};
				% Niveau 2
				\node (n21) at (50pt, 150pt)
				{
					\begin{tabular}{c|cc}
						\toprule
						\texttt{Id} & \texttt{R} & \texttt{D} \\
						\midrule
						$1$ & $5$ & $25$ \\
						$6$ & $1$ & $35$ \\
						\bottomrule
					\end{tabular}
				};
				\node (n22) at (150pt, 150pt)
				{
					\begin{tabular}{c|cc}
						\toprule
						\texttt{Id} & \texttt{R} & \texttt{L} \\
						\midrule
						$1$ & $5$ & $30$ \\
						$3$ & $4$ & $40$ \\
						$4$ & $4$ & $40$ \\
						$6$ & $1$ & $50$ \\
						\bottomrule
					\end{tabular}
				};
				\node (n23) at (250pt, 150pt)
				{
					\begin{tabular}{c|cc}
						\toprule
						\texttt{Id} & \texttt{D} & \texttt{L} \\
						\midrule
						$1$ & $25$ & $30$ \\
						\bottomrule
					\end{tabular}
				};
				% Niveau 3
				\node (top) at (150pt, 225pt)
				{
					\begin{tabular}{c|ccc}
						\toprule
						\texttt{Id} & \texttt{R} & \texttt{D} & \texttt{L} \\
						\midrule
						$1$ & $5$ & $25$ & $30$ \\
						$3$ & $4$ & $35$ & $40$ \\
						$6$ & $1$ & $35$ & $50$ \\
						\bottomrule
					\end{tabular}
				};
				
				%% Draws
				% Niveau 0 -> 1
				\draw [stealth-] (bottom) -- (n11.south);
				\draw [stealth-] (bottom) -- (n12.south);
				\draw [stealth-] (bottom) -- (n13.south); 
				% Niveau 1 -> 2
				\draw [stealth-] (n11.90) -- (n21.270);
				\draw [stealth-] (n11.65) -- (n22.245);
				
				\draw [stealth-] (n12.115) -- (n21.290);
				\draw [stealth-] (n12.65) -- (n23.245);
				
				\draw [stealth-] (n13.115) -- (n22.290);
				\draw [stealth-] (n13.90) -- (n23.270);
				% Niveau 2 -> 3
				\draw [stealth-] (n21.north) -- (top);
				\draw [stealth-] (n22.north) -- (top);
				\draw [stealth-] (n23.north) -- (top);
			\end{tikzpicture}
		}
		\caption{Representation of the $\texttt{Pokémon}_{1}$ relation's Skycube lattice}\label{fig:skycube_treillis_complet_3_1}
	\end{figure}
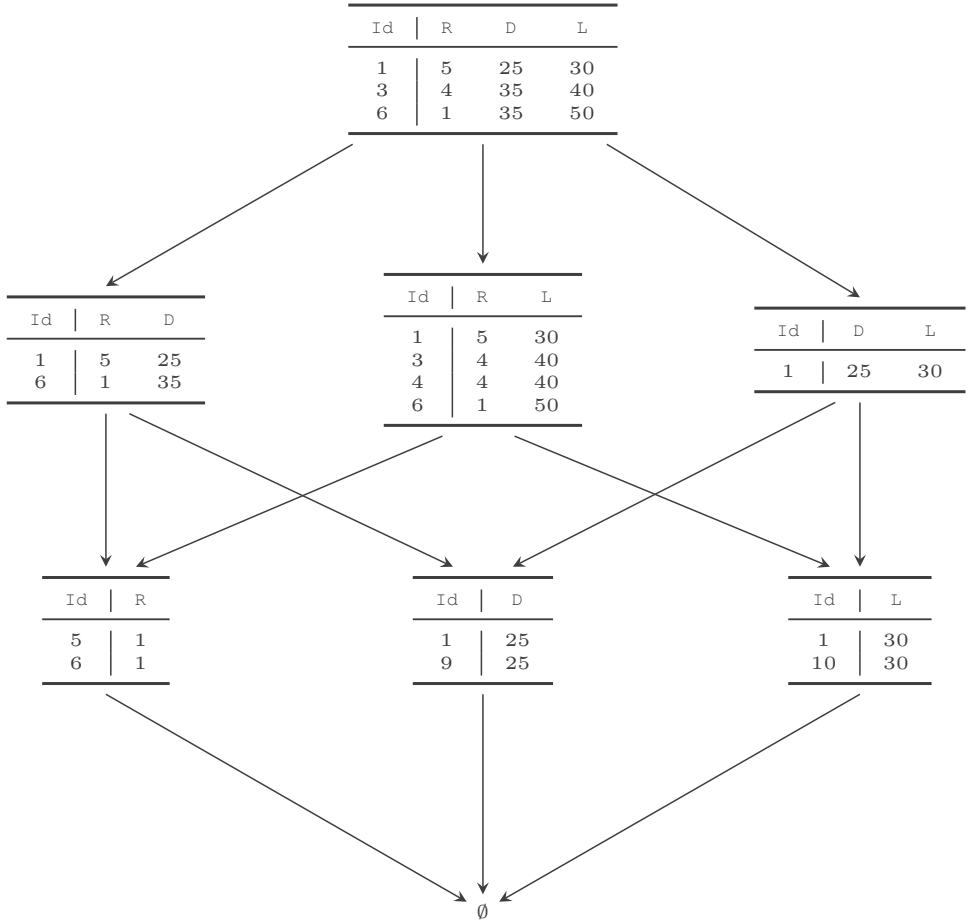
	
	\begin{figure}[htbp]
		\centering
		\tiny
		\resizebox{1\textwidth}{!}{
			\begin{tikzpicture}[
				line join=bevel
				]
				
				%% Nodes
				% Niveau 0
				\node (bottom) at (150pt, 0pt) {$\emptyset$};
				% Niveau 1
				\node (n11) at (50pt, 75pt)
				{
					\begin{tabular}{c|c}
						\toprule
						\texttt{Id} & \texttt{R} \\
						\midrule
						$5$ & $1$ \\
						$6$ & $1$ \\
						\bottomrule
					\end{tabular}
				};
				\node (n12) at (150pt, 75pt)
				{
					\begin{tabular}{c|c}
						\toprule
						\texttt{Id} & \texttt{D} \\
						\midrule
						$1$ & $20$ \\
						$9$ & $20$ \\
						\bottomrule
					\end{tabular}
				};
				\node (n13) at (250pt, 75pt)
				{
					\begin{tabular}{c|c}
						\toprule
						\texttt{Id} & \texttt{L} \\
						\midrule
						$1$ & $30$ \\
						$3$ & $30$ \\
						$6$ & $30$ \\
						$9$ & $30$ \\
						\bottomrule
					\end{tabular}
				};
				% Niveau 2
				\node (n21) at (50pt, 150pt)
				{
					\begin{tabular}{c|cc}
						\toprule
						\texttt{Id} & \texttt{R} & \texttt{D} \\
						\midrule
						$1$ & $5$ & $20$ \\
						$6$ & $1$ & $30$ \\
						\bottomrule
					\end{tabular}
				};
				\node (n22) at (150pt, 150pt)
				{
					\begin{tabular}{c|cc}
						\toprule
						\texttt{Id} & \texttt{R} & \texttt{L} \\
						\midrule
						$6$ & $1$ & $30$ \\
						\bottomrule
					\end{tabular}
				};
				\node (n23) at (250pt, 150pt)
				{
					\begin{tabular}{c|cc}
						\toprule
						\texttt{Id} & \texttt{D} & \texttt{L} \\
						\midrule
						$1$ & $20$ & $30$ \\
						$9$ & $20$ & $30$ \\
						\bottomrule
					\end{tabular}
				};
				% Niveau 3
				\node (top) at (150pt, 225pt)
				{
					\begin{tabular}{c|ccc}
						\toprule
						\texttt{Id} & \texttt{R} & \texttt{D} & \texttt{L} \\
						\midrule
						$1$ & $5$ & $20$ & $30$ \\
						$6$ & $3$ & $30$ & $30$ \\
						\bottomrule
					\end{tabular}
				};
				
				%% Draws
				% Niveau 0 -> 1
				\draw [stealth-] (bottom) -- (n11.south);
				\draw [stealth-] (bottom) -- (n12.south);
				\draw [stealth-] (bottom) -- (n13.south); 
				% Niveau 1 -> 2
				\draw [stealth-] (n11.90) -- (n21.270);
				\draw [stealth-] (n11.65) -- (n22.245);
				
				\draw [stealth-] (n12.115) -- (n21.290);
				\draw [stealth-] (n12.65) -- (n23.245);
				
				\draw [stealth-] (n13.115) -- (n22.290);
				\draw [stealth-] (n13.90) -- (n23.270);
				% Niveau 2 -> 3
				\draw [stealth-] (n21.north) -- (top);
				\draw [stealth-] (n22.north) -- (top);
				\draw [stealth-] (n23.north) -- (top);
			\end{tikzpicture}
		}
		\caption{Representation of the $\texttt{Pokémon}_{2}$ relation's Skycube lattice}\label{fig:skycube_treillis_complet_3_2}
	\end{figure}
	
	\begin{example}
		The figure~\ref{fig:skycube_treillis_complet_3_1} shows the Skycube associated to the example relation $\texttt{Pokémon}_{1}$ (cf. table~\ref{tab:relation_exemple_4_1}) for novice \texttt{Rank} players, and the figure~\ref{fig:skycube_treillis_complet_3_2} shows the Skycube associated to the example relation $\texttt{Pokémon}_{2}$ (cf. table~\ref{tab:relation_exemple_4_2}) for expert players. Criteria are symbolized by their initials (\texttt{R} for \texttt{Rarity}, \texttt{D} for \texttt{Duration} and \texttt{L} for \texttt{Loss}).
	\end{example}
	
	\subsection{Overview of the proposed method}\label{ssec:apercu_de_la_methode_proposee}
	
	The main steps of the proposed method are summarize as follow:
	\begin{enumerate}
		\item Skycubes lattice computation
		\item Agree concepts lattices computation
		\item Skycubes lattice reduced representation
		\item Merged relation
		\item Abridged merged relation
		\item Emerging Skycube of the Abridged merged relation
		\item Emerging Skycube with or without loss reductions
	\end{enumerate}
	
	We detail these different steps in this paper.
	
	\section{Skyline and Skycube}\label{sec:skyline_et_skycube}
	
	In a decision-making context, some queries return no results. In these queries, the user looks for tuples with optimal values on some criteria. Using several of these criteria makes these queries unsuccessful. In fact, a tuple may be optimal for a criterion but not for another, so it is removed from the result even though it could have been relevant for the user.
	
	\subsection{Skyline}\label{ssec:skyline}
	
	In order to provide an adequate answer to the described queries, the Skyline operator (\cite{borzsonySkylineOperator2001}), previously known as Pareto set and as maximal vector problem (\cite{bentleyAverageNumberMaxima1978}), was introduced. It consider the whole set of criteria of a search as that much preferences and extracts globally optimal tuples for this preferences set. Therefore, rather than searching an hypothetical ideal solution, it extracts the possible candidates closest to the user wishes. Its general principle is based on the concept of dominance. An object or a tuple is said dominated by another if, for all criteria relevant to the decision maker, it is less optimal than this other. Such tuple is eliminated from the result, not because it is irrelevant for one of the criteria but because it is not optimal for this combination of criteria. In other words, for the user, there is at least one better solution, which will be kept.
	
	\subsection{Skycube}\label{ssec:skycube}
	
	A multidimensional generalization of the Skyline operator has been proposed through Skycube (\cite{yuanEfficientComputationSkyline2005, peiCatchingBestViews2005}). This structure brings together all the possible Skylines for all different combination of criteria. It is then possible to search efficiently dominant entities based on different criteria combination. Moreover, with this structure, it becomes possible to observe the behavior of dominant entities across the dimensional space and thus to analyze and to understand the different dominance factors. As this concept is inspired by datacube, it suffers from the same inconveniences of computational cost and storage space explosion. So, as for the datacube, it is natural to try to propose reduced representations and associated algorithms.
	
	\begin{definition}[Skyline subspace]\index{Skyline subspace}
		A subset of dimensions $\mathcal{C} \subseteq \mathcal{D}$ ($B \neq \emptyset$) forms a subspace $|\mathcal{C}|$-dimensional of $\mathcal{D}$. For a tuple $t$ in the $\mathcal{D}$ space, the projection of $t$ in the subspace $\mathcal{C}$, noted $t[\mathcal{C}]$, is a $|\mathcal{C}|$-tuple $(t.Di_1, \dotsc, t.Di_{|\mathcal{C}|})$, with $t.Di_1, \dotsc, t.Di_{|\mathcal{C}|} \in \mathcal{C}$ and $i_1 < \dotsc < i_{|\mathcal{C}|}$. The projection of a tuple $t$ ($t \in r$) in a subspace $\mathcal{C} \subseteq \mathcal{D}$ belongs to the Skyline subspace according to $\mathcal{C}$, if no tuple $t'[\mathcal{C}]$ (with $t' \in r$) dominates $t[\mathcal{C}]$ in $\mathcal{C}$. $t$ is called an object of the Skyline subspace according to $\mathcal{C}$. We call $SKY_{\mathcal{C}}(r)$ the Skyline subspace according to $\mathcal{C}$ for the $r$ relation.
	\end{definition}
	
	\begin{definition}[Skycube]\index{Skycube}
		A Skycube is the set of all Skylines in all the non-empty possible subspaces of $\mathcal{C}$: 
		\[S_C(r, \mathcal{C}) = \{ (C, SKY_{C}(r)) \mid C \subseteq \mathcal{C}\}\]
		
		$SKY_{C}(r)$ is called the Skyline cuboid (or Skycuboid) of the subspace $C$. By convention the Skycuboid of the empty criteria set is empty (\emph{i.e.} $SKY_{\emptyset}(r) = \emptyset$).
	\end{definition}
	
	The Skycube structure can be represented by a lattice similar to the datacube's lattice. Skycuboids are grouped by level depending on their criteria count. These levels are numbered from the bottom of the lattice (Skycuboids on a single criterion) and to the top (Skycuboid on all possible criteria).
	
	A multidimensional Skyline query returns the subset of tuples from the original relation forming the Skyline in a given subspace. Once the Skyline calculated, any query can be answered efficiently.
	
	Skyline belonging is not monotonous, that is a tuple $t$ belonging to a Skycuboid $SKY_{\mathcal{U}}(r)$ is not automatically contained in this Skycuboid's ancestors .
	
	\section{Agree concept and Skyline concept lattices}\label{sec:treillis_des_concepts_accord_et_skylines}
	
	As for the datacube, the Skycube can contain superfluous information. It is this problematic that motivated the proposal of Skycube reduced representations (\cite{peiMultidimensionalSubspaceSkyline2006, liuMulticonstraintShortestPath2023}). Our contribution addresses the same reduction problematic by also combining formal concept analysis (\cite{ganterFormalConceptAnalysis2024}) and the Skyline. To prevent the important cost to rebuild Skycuboids induced by value-oriented grouping from \cite{peiMultidimensionalSubspaceSkyline2006}, our reduction method chooses a criterion-oriented grouping approach based on agree sets.
	
	In this section, our goal is to define a formal framework combining the concept of agree set and the concept lattice. We propose a new structure, the Agree concept lattice of a relation, on which our partial Skycube materialization is based. After a reminder of the notions of partition lattice, agree set, associated equivalence classes and the Agree concept lattice (\cite{lakhalMultidimensionalSkylineAnalysis2017}), we characterize the Skyline concept lattice for emerging Skycube partial materialization.
	
	\subsection{The partition lattice}
	
	The concepts reminded in this subsection have been used for solving database problems (\cite{spyratosPartitionModelDeductive1987}).
	
	\begin{definition}[Partition of a set]\index{Partition of a set}
		Let $E$ be a set, a partition $\pi(E)$\footnote{When there is no ambiguity for a set $E$, we note the partition $\pi$.} of a set $E$ is a family of subsets of this set such as each item of $E$ exactly belongs to only one of these families (or classes). In other words, $\pi(E)$ is a family of disjoint sets ($\forall X, Y \in \pi(E)$ we have $X \cap Y = \emptyset$) and their union is equal to $E$ ($\bigcup_{X \in \pi(E)} = E$).
	\end{definition}
	
	\begin{definition}[Order relationship between partitions]\index{Order relationship between partitions}
		Let $\pi(E)$, $\pi'(E)$ be two partitions of a set $E$, $\pi(E)$ is refinement of $\pi'(E)$ if and only if any class of $\pi(E)$ is obtained by dividing classes of $\pi'(E)$\footnote{In a equivalent way, $\pi(E)$ is a refinement of $\pi'(E)$ if and only if any class of $\pi'(E)$ is the result from the union of classes of $\pi(E)$.}. The refinement relationship between two partitions is an partial order relationship noted $\sqsubseteq$. It is defined as:
		\[\pi(E) \sqsubseteq \pi'(E) \Leftrightarrow \pi(E) \text{ is a refinement of } \pi'(E)\footnote{Reciprocally $\pi'(E)$ is said rougher than $\pi(E)$.} \Leftrightarrow (\forall X \in \pi(E), \exists X' \in \pi'(E), X \subseteq X')\]
	\end{definition}
	
	\begin{definition}[Partition product]\index{Partition product}
		Let $\pi(E)$ and $\pi'(E)$ be two partitions of a set $E$. The partition product of $\pi(E)$ and $\pi'(E)$, noted $\pi(E) \bullet \pi'(E)$, is obtained as:
		\[\pi(E) \bullet \pi'(E) = \{ Z = X \cap Y \mid Z \neq \emptyset, X \in \pi(E) \text{ and } Y \in \pi'(E)\}\]
	\end{definition}
	
	\begin{definition}[Sum of partitions]\index{Sum of partitions}
		Let the helper function $R$ be:
		\[R(e, F) = \bigcup_{\substack{X \in F \\ e \in X}} X\]
		
		With $e$ an item of a set $E$, $F$ a family of subsets of $E$. $R(e, F)$ is the union of $F$ sets containing $e$.
		
		Let $\pi(E)$ and $\pi'(E)$ be two partitions of a set $E$. The sum of the partitions $\pi(E)$ and $\pi'(E)$, noted $\pi(E) + \pi'(E)$, is obtained by transitive closure of the operation which associates an item of $E$ to the set of elements of its classes in $\pi(E)$ and $\pi'(E)$ (\cite{birkhoffLatticeTheory1967}). The sequence $S$ is defined below to formalize this computation: 
		\[
		\left\{
		\begin{array}{lcl}
			S_0 = \max_{\subseteq}({\pi(E) \cup \pi'(E)}) \\
			S_n = \max_{\subseteq}(\{ R(e, S_{n-1}) \mid e \in E\})
		\end{array}\right.
		\]
		
		So the sum operator can be defined as:
		\[\pi(E) + \pi'(E) = S_k \text{ with } k \text{ as } S_k = S_{k-1}\]
	\end{definition}
	
	\begin{theorem}[Partition lattice]\index{Partition lattice}
		Let $\Pi(E)$ be the set of possible partitions of a set $E$. The ordered set $\langle\Pi(E), \sqsubseteq \rangle$ forms a complete lattice named partition lattice of $E$. $\forall P \subseteq \Pi(E)$, its \emph{infimum} or lower bound ($\bigwedge$) and its \emph{supremum} or upper bound ($\bigvee$) are given below:
		\begin{align*}
			\bigwedge P &= \bullet_{\pi \in P} \pi \\
			\bigvee P &= +_{\pi \in P} \pi
		\end{align*}
	\end{theorem}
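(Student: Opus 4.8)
The plan is to verify that $\langle \Pi(E), \sqsubseteq \rangle$ is a complete lattice by showing that $\sqsubseteq$ is a partial order, that every subset $P \subseteq \Pi(E)$ has a greatest lower bound given by the iterated partition product $\bullet_{\pi \in P}\pi$, and that every such $P$ has a least upper bound given by the iterated sum $+_{\pi \in P}\pi$. First I would check that $\sqsubseteq$ as defined is reflexive, antisymmetric and transitive; reflexivity and transitivity are immediate from the characterisation $\pi \sqsubseteq \pi' \Leftrightarrow (\forall X \in \pi, \exists X' \in \pi', X \subseteq X')$, and antisymmetry follows because if each class of $\pi$ is contained in a class of $\pi'$ and vice versa, a short argument forces the two classes to coincide (since both partitions cover $E$ with disjoint classes, the containments can only be equalities). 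That $\langle \Pi(E), \sqsubseteq\rangle$ has a least element (the partition into singletons) and a greatest element (the one-block partition $\{E\}$) is obvious and not strictly needed once arbitrary infima and suprema are established, but it is worth recording.

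Next I would establish the infimum claim. The key step is to prove that $\pi \bullet \pi'$, as defined by $\pi \bullet \pi' = \{X \cap Y \mid X \cap Y \neq \emptyset,\ X \in \pi,\ Y \in \pi'\}$, is genuinely a partition of $E$: its classes are pairwise disjoint because intersecting blocks from two partitions is still disjoint, and they cover $E$ because every $e \in E$ lies in exactly one $X \in \pi$ and one $Y \in \pi'$, hence in $X \cap Y$. Then $\pi \bullet \pi' \sqsubseteq \pi$ and $\pi \bullet \pi' \sqsubseteq \pi'$ are immediate from the definition of $\sqsubseteq$. For the "greatest" part, suppose $\sigma \sqsubseteq \pi$ and $\sigma \sqsubseteq \pi'$: any class $Z \in \sigma$ is contained in some $X \in \pi$ and some $Y \in \pi'$, hence in $X \cap Y \in \pi \bullet \pi'$, so $\sigma \sqsubseteq \pi \bullet \pi'$. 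Extending from two partitions to an arbitrary family $P$ is done by induction for finite $P$; for infinite $E$ and infinite $P$ one defines $\bigwedge P$ directly as the partition whose blocks are the non-empty intersections $\bigcap_{\pi \in P} X_\pi$ (one block $X_\pi$ chosen from each $\pi$), and the same disjointness/covering/universality arguments go through verbatim.

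The harder part is the supremum. Here I would argue that $\bigvee P = +_{\pi \in P}\pi$ is the coarsest common refiner, equivalently the finest partition that each $\pi \in P$ refines. The strategy is to show that the transitive-closure construction via the sequence $S_n$ terminates (for finite $E$ it must stabilise since each step only merges blocks, so $|S_n|$ is non-increasing and bounded below) and that its limit $S_k$ is a partition with $\pi \sqsubseteq S_k$ for all $\pi \in P$. The main obstacle — and the step I expect to require the most care — is verifying the \emph{minimality} of $S_k$: given any $\tau$ with $\pi \sqsubseteq \tau$ for all $\pi \in P$, one must show $S_k \sqsubseteq \tau$. The natural approach is to prove by induction on $n$ that every block of $S_n$ is contained in a block of $\tau$; the base case uses $\pi \sqsubseteq \tau$ for each $\pi \in P$, and the inductive step uses the fact that the helper $R(e, S_{n-1})$ merges only blocks sharing a common element, and such merges stay inside a single block of $\tau$ because $\tau$'s blocks are already unions of blocks of each $\pi$. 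Equivalently, one may invoke the standard fact (\cite{birkhoffLatticeTheory1967}) that the join of equivalence relations is the transitive closure of their union and that this is precisely the least equivalence relation containing all of them. Once both bounds exist for every $P \subseteq \Pi(E)$, completeness of the lattice follows by the standard criterion, and the displayed formulas for $\bigwedge P$ and $\bigvee P$ are exactly what we have constructed.
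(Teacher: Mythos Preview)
The paper does not actually prove this theorem: it is stated as a classical fact (the sum-of-partitions construction is attributed to \cite{birkhoffLatticeTheory1967}), and the text moves directly to an example. Your plan is the standard textbook argument and is correct; the verifications that $\sqsubseteq$ is a partial order, that $\bullet$ yields the greatest lower bound, and that the transitive-closure construction $+$ yields the least upper bound are precisely what one finds in Birkhoff. One small wording slip: in the supremum paragraph you call $\bigvee P$ ``the coarsest common refiner'', but a common \emph{refiner} of the $\pi \in P$ would be a lower bound; what you mean, and what you then correctly argue, is the finest common \emph{coarsening}, i.e.\ the finest partition that every $\pi \in P$ refines.
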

	
	\begin{example}
		The Hasse diagram of the partition lattice of the set $E = \{1, 2, 3, 4\}$ is shown in figure~\ref{treillis_des_partitions}. The roughest partitions are at the bottom and the thinnest ones at the top.
	\end{example}
	
	\begin{figure}
		\centering
		\resizebox{1\textwidth}{!}{
			\begin{tikzpicture}[
				line join=bevel,
				]
				
				%% Nodes
				% Niveau 0
				\node (bottom) at (150pt, 0pt) {$\{1234\}$};
				% Niveau 1
				\node (p11) at (0pt, 75pt) {$\{13,24\}$};  
				\node (p12) at (50pt, 75pt) {$\{1,234\}$};
				\node (p13) at (100pt, 75pt) {$\{2,134\}$};
				\node (p14) at (150pt, 75pt) {$\{3,124\}$};
				\node (p15) at (200pt, 75pt) {$\{12,34\}$};    
				\node (p16) at (250pt, 75pt) {$\{4,123\}$};
				\node (p17) at (300pt, 75pt) {$\{14,23\}$};
				% Niveau 2
				\node (p21) at (0pt, 150pt) {$\{1,3,24\}$};
				\node (p22) at (60pt, 150pt) {$\{1,2,34\}$};
				\node (p23) at (120pt, 150pt) {$\{2,4,13\}$};
				\node (p24) at (180pt, 150pt) {$\{2,3,14\}$};
				\node (p25) at (240pt, 150pt) {$\{3,4,12\}$};        
				\node (p26) at (300pt, 150pt) {$\{1,4,23\}$};
				% Niveau 3
				\node (top) at (150pt, 225pt) {$\{1,2,3,4\}$};
				
				%% Draws
				% Niveau 0 -> 1
				\draw [] (bottom) -- (p11.south);
				\draw [] (bottom) -- (p12.south);
				\draw [] (bottom) -- (p13.south);
				\draw [] (bottom) -- (p14.south);
				\draw [] (bottom) -- (p15.south);
				\draw [] (bottom) -- (p16.south);
				\draw [] (bottom) -- (p17.south);
				% Niveau 1 -> 2
				\draw [] (p11.90) -- (p21.south);
				\draw [] (p11.65) -- (p23.south);
				
				\draw [] (p12.115) -- (p21.south);
				\draw [] (p12.90) -- (p22.south);
				\draw [] (p12.65) -- (p26.south);
				
				\draw [] (p13.115) -- (p22.south);
				\draw [] (p13.90) -- (p23.south);
				\draw [] (p13.65) -- (p24.south);
				
				\draw [] (p14.115) -- (p21.south);
				\draw [] (p14.90) -- (p24.south);
				\draw [] (p14.65) -- (p25.south);
				
				\draw [] (p15.90) -- (p22.south);
				\draw [] (p15.65) -- (p25.south);
				
				\draw [] (p16.115) -- (p23.south);
				\draw [] (p16.90) -- (p25.south);
				\draw [] (p16.65) -- (p26.south);
				
				\draw [] (p17.90) -- (p24.south);
				\draw [] (p17.65) -- (p26.south);
				% Niveau 2 -> 3
				\draw [] (p21.north) -- (top);
				\draw [] (p22.north) -- (top);
				\draw [] (p23.north) -- (top);
				\draw [] (p24.north) -- (top);
				\draw [] (p25.north) -- (top);
				\draw [] (p26.north) -- (top);
			\end{tikzpicture}
		}
		\caption{Hasse diagram of partition lattice of $E = \{1, 2, 3, 4\}$}\label{treillis_des_partitions}
	\end{figure}
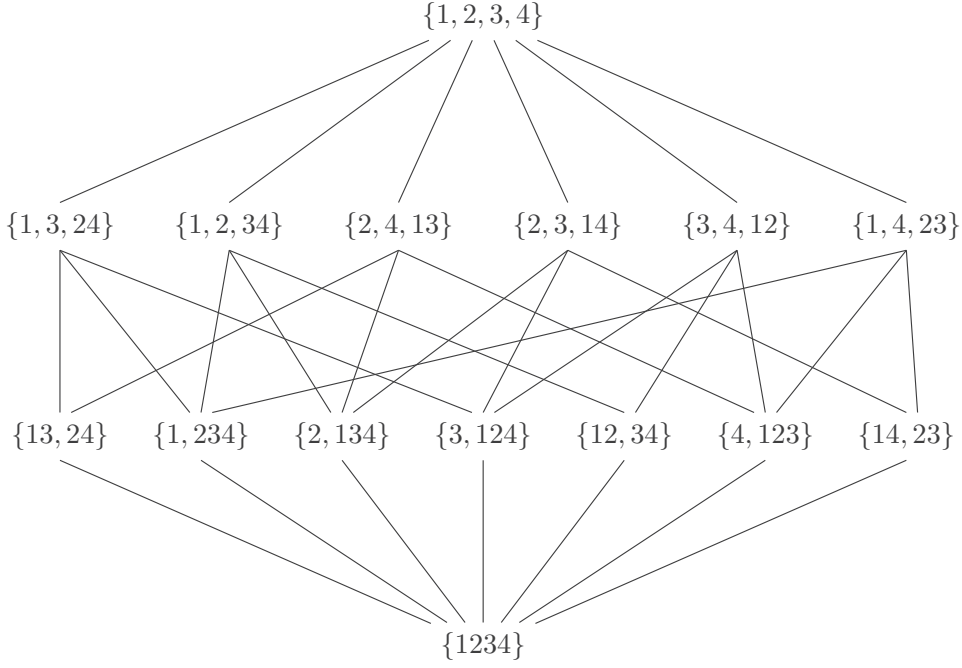
	
	\subsection{Agree sets}
	
	The concept of agree set, as well as its associated closure system, introduced in \cite{beeriStructureArmstrongRelations1984} to characterize the Armstrong relation, has been successfully used to discover exact and approximate functional dependencies (\cite{lopesFunctionalApproximateDependency2002}). Two tuples agree on an attribute set $X$ if they share the same values on $X$.
	
	We begin by presenting the agree set definition, then we give other derived definitions.
	
	\begin{definition}[Agree set]\index{Agree set}
		Let $P \subseteq r$ be a subset of $r$ relation's tuples and $D \subseteq \mathcal{D}$ a dimension set, $P$ agree on $D$ if and only if $\forall t, t' \in P$ we have $t[D] = t'[D]$.
		
		In other words, let $t$, $t'$ be two tuples of $r$ and $X \subseteq \mathcal{C}$ an attribute set (criteria in our context). $t$, $t'$ agree on $X$ if and only if $t_i[X] = t_j[X]$. The agree set of $t$ and $t'$, $Agr(t, t')$ is defined as follows:
		\[Agr(t, t') = \{C \in \mathcal{C} \mid t[C] = t'[C]\}\]
		
		This definition can be generalized to a tuples set $T \subseteq r$ composed by at least two elements:
		\[Agr(T) = \{C \in \mathcal{C} \mid t[C] = t'[C], \forall t, t' \in T \}\]
	\end{definition}
	
	\begin{definition}[Database relation agree set]\index{Database relation agree set}
		The agree attribute set $Agree$ of a relation $r$ is defined as follows: 
		\[Agree(r) = \{Agr(t_i, t_j) \mid t_i, t_j \in r \text{ and } i \neq j\}\]
		
		This set can be redefined in a equivalent way:
		\[Agree(r) = \{Agr(T) \mid \forall T \subseteq r \text{ and } |T| \geq 2\}\] 
	\end{definition}
	
	\begin{example}
		With the $\texttt{Pokémon}_{1}$ relation (cf. table~\ref{tab:relation_exemple_4_1}), $Agr(t_7, t_8) = RD$ because these two tuples share the same value on \texttt{Rarity} and \texttt{Duration}, and have different values on \texttt{Loss}. Also, $Agr(\{t_2, t_6, t_9\}) = L$ because these three tuples share the same value only on the criterion \texttt{Loss}. The agree attribute set of $\texttt{Pokémon}_{1}$ relation is: $Agree(\texttt{Pokémon}_{1}) = \{\emptyset, R, D, L, RD, RL\}$.
	\end{example}
	
	\begin{definition}[Tuple equivalence class]\index{Tuple equivalence class}
		Let $r$ be a relation and $C \subseteq \mathcal{C}$ a criterion set. The tuple $t$ equivalence class according to $C$, noted $[t]_C$, is defined as the identifier $i$ (\texttt{RowId}) set of all the tuples $t_i\in r $ which agree on $t$ according to $C$ (\emph{i.e.} the identifier set of tuples $t_i$ sharing with $t$ the same values for $C$). So we have:
		\[[t]_C = \{i \in Tid(r) \mid t_i[C] = t[C]\}\]
	\end{definition}
	
	\begin{example}
		With the $\texttt{Pokémon}_{1}$ relation, $[t_2]_{R} = \{2, 3, 4\}$ because only the tuples $t_2$, $t_3$ and $t_4$ share the same value on criterion \texttt{Rarity}.
	\end{example}
	
	\subsection{Relation agree concepts}
	
	Our objective is to define a specific concept lattice based on agree sets and database partitions (\cite{spyratosPartitionModelDeductive1987}). Thus, we characterize an instance of the Galois connection between the power set (set of all subsets) lattice of the criterion set and the partitions lattice of the tuple identifier set. This connection enables the possibility to define dual closure operators, to introduce the Agree concept and to characterize the Agree concept lattice.
	
	\begin{definition}
		Let $\texttt{RowId}: r \rightarrow \mathbb{N}$ be an application associating to each tuple a unique natural integer and $Tid(r) = \{Rowid(t) \mid t \in r\}$ the tuple identifier set. Let $f$, $g$ be two applications between the ordered sets $\langle\Pi(Tid(r)), \sqsubseteq \rangle$ and $\langle\mathscr{P}(\mathcal{C}), \subseteq\rangle$ defined as follows:
		\[
		\begin{array}{lrcl}
			f: & \langle\Pi(Tid(r)), \sqsubseteq \rangle & \longrightarrow & \langle\mathscr{P}(\mathcal{C}), \subseteq\rangle \\
			& \pi & \longmapsto & \displaystyle\bigcap_{[t] \in \pi} Agr(\{t_i \mid i \in [t]\}) \\
			g: & \langle\mathscr{P}(\mathcal{C}), \subseteq\rangle & \longrightarrow & \langle\Pi(Tid(r)), \sqsubseteq \rangle \\
			& C & \longmapsto & \displaystyle \{[t]_{C} \mid t \in r \}
		\end{array}
		\]
	\end{definition}
	
	For a criterion set $C$, the equivalence class set according to $C$ forms a partition of $Tid(r)$. The application $g$ associates $C$ to this identifier partition. The latter is noted $\pi_C$ and defined as $\pi_C = g(C)$. The set of all the possible partitions $\pi_C$ is noted $\Pi_{\mathscr{P}(\mathcal{C})}$. The application $f$ handles the opposite association.
	
	\begin{example}\label{ex:accords_f_g}
		With the $\texttt{Pokémon}_{2}$ relation, assuming the criterion sets $R$, $D$ and $DL$, we have $g(R) = \{1, 234, 56, 78, 9[10]\}$ (are in the same equivalence class $t_1$ alone, $t_2$ and $t_3$ and $t_4$, $t_5$ and $t_6$, $t_7$ and $t_8$ and lastly $t_9$ and $t_[10]$), $g(D) = \{19, 2[10], 36, 47, 58\}$ and $g(DL) = \{19, 2[10], 36, 47, 58\}$. With $\{19, 2[10], 36, 47, 58\}$ and $\{1, 2, 34, 56, 78, 9[10]\}$ as partitions, we have $f(\{19, 2[10], 36, 47, 58\}) = DL$ and $f(\{1, 2, 34, 56, 78, 9[10]\}) = R$.
	\end{example}
	
	\begin{proposition}
		The application couple $gc = (f, g)$ is a Galois connection between the power set lattice of the criterion set $\mathcal{C}$ and the partitions lattice of $Tid(r)$.
	\end{proposition}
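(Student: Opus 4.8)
The plan is to establish directly the defining adjunction property of a Galois connection in its order-reversing (antitone) form: for every criterion set $C \in \mathscr{P}(\mathcal{C})$ and every partition $\pi \in \Pi(Tid(r))$,
\[
C \subseteq f(\pi) \iff \pi \sqsubseteq g(C).
\]
Since $\subseteq$ and $\sqsubseteq$ are exactly the partial orders of the two lattices, this equivalence \emph{is} the Galois connection. From it one gets for free that $f$ and $g$ are antitone and that the two composites are extensive, i.e. $C \subseteq f(g(C))$ and $\pi \sqsubseteq g(f(\pi))$; I would still record these explicitly, since the closure operators $g \circ f$ and $f \circ g$ and the associated Agree concept lattice are built on them downstream.

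First I would unfold both sides of the claimed equivalence into statements about tuple values. The condition $C \subseteq f(\pi) = \bigcap_{[t] \in \pi} Agr(\{t_i \mid i \in [t]\})$ means: for every block $B \in \pi$, $C \subseteq Agr(\{t_i \mid i \in B\})$, i.e. any two tuples whose identifiers lie in the same block of $\pi$ take equal values on every criterion of $C$. The condition $\pi \sqsubseteq g(C) = \pi_C$ means, by the definition of refinement, that every block of $\pi$ is contained in a single equivalence class $[t]_C$, i.e. again that any two identifiers lying in the same block of $\pi$ index tuples agreeing on all of $C$. Both sides therefore translate to the identical first-order statement ``$\pi$ separates tuples at least as finely as agreement on $C$ does'', which makes the equivalence essentially immediate.

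I would then write out the two implications for safety. For $\Rightarrow$: assume $C \subseteq f(\pi)$, fix a block $B \in \pi$; for any $i,j \in B$ we have $C \subseteq Agr(\{t_k \mid k \in B\})$, hence $t_i[C] = t_j[C]$, so $i$ and $j$ lie in the same class of $\pi_C$; as $B$ was arbitrary, each block of $\pi$ sits inside one class of $g(C)$, i.e. $\pi \sqsubseteq g(C)$. For $\Leftarrow$: assume $\pi \sqsubseteq g(C)$, fix $B \in \pi$ and the class $[t]_C \in \pi_C$ containing it; all tuples indexed by $[t]_C$ agree on $C$, a fortiori those indexed by $B$, so $C \subseteq Agr(\{t_k \mid k \in B\})$; intersecting over all blocks of $\pi$ yields $C \subseteq f(\pi)$. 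I would also dispatch the boundary cases so that the quantification is literal — the coarsest partition $\{Tid(r)\}$ mapping under $f$ to the agree set of the whole relation, and $C = \emptyset$ for which $C \subseteq f(\pi)$ holds trivially — and, from the established equivalence, deduce antitonicity and extensivity of $f$ and $g$ in one line each.

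\textbf{Anticipated difficulty.} There is no genuine obstacle here; the argument is a definition chase. The only point that demands care is keeping the orientation of $\sqsubseteq$ straight: in this paper finer partitions are the $\sqsubseteq$-larger elements, so adding criteria produces a $\sqsubseteq$-larger partition (hence $g$ is antitone with respect to $\subseteq$ and $\sqsubseteq$) while refining a partition enlarges its agree set (hence $f$ is antitone). Once that is pinned down, and once the two unfoldings above are recognised as literally the same condition rather than merely ``morally'' the same, the proof reduces to the two short implications.
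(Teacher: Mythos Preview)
Your argument is correct: unfolding both sides of the claimed equivalence $C \subseteq f(\pi) \iff \pi \sqsubseteq g(C)$ into the single first-order condition ``tuples in the same $\pi$-block agree on all of $C$'' is exactly the right move, and your two implications go through cleanly. This is the standard verification that $(f,g)$ is an antitone Galois connection, and nothing more is needed.

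As for comparison: the paper does not actually prove this proposition. It states it, immediately uses it to define the closure operators $h = f\circ g$ and $h' = g\circ f$, and later invokes it (via Ganter--Wille) to get the Agree concept lattice. So your write-up supplies what the paper omits rather than taking a different route to the same place.

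One small slip worth fixing in your ``anticipated difficulty'' paragraph: you write that in this paper finer partitions are the $\sqsubseteq$-\emph{larger} elements. The paper's Definition of the order says the opposite --- $\pi \sqsubseteq \pi'$ means $\pi$ refines $\pi'$, so finer is $\sqsubseteq$-\emph{smaller} (the infimum is the partition product, which refines both factors). Your sentence ``adding criteria produces a $\sqsubseteq$-larger partition (hence $g$ is antitone)'' is then internally inconsistent: with the paper's convention, adding criteria produces a $\sqsubseteq$-\emph{smaller} partition, and \emph{that} is why $g$ is antitone. This does not touch your main argument --- the equivalence you prove is orientation-agnostic once both sides are unpacked --- but you should correct the aside so it matches the paper's conventions.
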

	
	\begin{definition}[Agree concept closure operators]\index{Agree concept closure operators}
		The couple $gc = (f, g)$ being a particular case of the Galois connection, the compositions $f \circ g$  and $g \circ f$  of the two applications are closure operators (\cite{ganterFormalConceptAnalysis2024}) defined as follows:
		\[
		\begin{array}{lrcl}
			h: & \mathscr{P}(\mathcal{C})& \longrightarrow & \mathscr{P}(\mathcal{C}) \\
			& C & \longmapsto & \displaystyle f( g (C) ) = \bigcap_{\substack{C'\in Agree(r) \\
					C \subseteq C'}} C' \\
			h':& \Pi(Tid(r)) & \longrightarrow & \Pi(Tid(r)) \\
			& \pi & \longmapsto & \displaystyle g( f (\pi) ) = \bullet_{\substack{\pi'\in \Pi_{\mathscr{P}(\mathcal{C})} \\
					\pi \sqsubseteq \pi'}} \pi' \\
		\end{array}
		\]
	\end{definition}
	
	\begin{corollaire}
		$h$ and $h'$ satisfy the following properties:
		\begin{enumerate}
			\item $C \subseteq C' \Rightarrow h(C) \subseteq h(C')$ and  
			$\pi \sqsubseteq \pi' \Rightarrow h'(\pi) \sqsubseteq h'(\pi')$ (isotonicity)
			\item  $C \subseteq h(C)$ and $\pi \sqsubseteq h'(\pi)$ (extensivity)
			\item $h(C) = h(h(C))$ and $h'(\pi) = h'(h'(\pi))$ (idempotence)
		\end{enumerate}
	\end{corollaire}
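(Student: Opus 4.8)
The plan is to derive the three properties directly from the fact, just established in the preceding proposition, that $gc=(f,g)$ is a Galois connection between $\langle\mathscr{P}(\mathcal{C}),\subseteq\rangle$ and $\langle\Pi(Tid(r)),\sqsubseteq\rangle$; no new machinery is needed, and the argument is the classical one showing that the two composites of a Galois connection are closure operators (see also \cite{ganterFormalConceptAnalysis2024}). Concretely, the whole proof rests on the defining equivalence of the connection, which in this setting reads $C\subseteq f(\pi)\Leftrightarrow\pi\sqsubseteq g(C)$: a class of $\pi$ being contained in a class of $\pi_C=g(C)$ means the tuples of that class share their $C$-values, i.e. every class of $\pi$ agrees on all of $C$, i.e. $C$ is included in the intersection defining $f(\pi)$.

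First I would extract the two standard consequences of this equivalence that do all the work. (i) Both $f$ and $g$ are antitone: instantiating the equivalence at $\pi=g(C)$ with the trivial $g(C)\sqsubseteq g(C)$ gives the unit inequality $C\subseteq f(g(C))$, and then $C\subseteq C'$ together with $C'\subseteq f(g(C'))$ and transitivity yields $C\subseteq f(g(C'))$, hence $g(C')\sqsubseteq g(C)$ by the equivalence; dually, instantiating at $C=f(\pi)$ gives $\pi\sqsubseteq g(f(\pi))$ and then the antitonicity of $f$. (ii) The two unit inequalities $C\subseteq h(C)$ and $\pi\sqsubseteq h'(\pi)$ are exactly the ones just obtained, since $h=f\circ g$ and $h'=g\circ f$.

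The three items then follow at once. Isotonicity: $h$ and $h'$ are each a composition of two antitone maps, hence order-preserving, which is precisely $C\subseteq C'\Rightarrow h(C)\subseteq h(C')$ and $\pi\sqsubseteq\pi'\Rightarrow h'(\pi)\sqsubseteq h'(\pi')$. Extensivity: this is (ii) verbatim. Idempotence: from $C\subseteq h(C)$ and antitonicity of $g$ one gets $g(h(C))\sqsubseteq g(C)$, while extensivity of $h'=g\circ f$ applied to $g(C)$ gives $g(C)\sqsubseteq g(f(g(C)))=g(h(C))$; hence $g(h(C))=g(C)$, and applying $f$ yields $h(h(C))=f(g(h(C)))=f(g(C))=h(C)$. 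The symmetric computation with the roles of $f,g$ and of $\subseteq,\sqsubseteq$ swapped gives $h'(h'(\pi))=h'(\pi)$.

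I do not expect a real obstacle here. The only point requiring care is the orientation of $\sqsubseteq$ on the partition lattice, so that $g$ is genuinely antitone (finer partitions corresponding to larger criterion sets) and the equivalence above is stated in the right direction; this is already pinned down by the definitions of $f$ and $g$ and illustrated by Example~\ref{ex:accords_f_g}. Alternatively, one may simply note that the statement merely unfolds the term ``closure operator'' already used in the preceding definition, and cite \cite{ganterFormalConceptAnalysis2024} for the general theorem that the two composites of a Galois connection are closure operators.
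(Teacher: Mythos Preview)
Your proposal is correct and aligns with the paper's approach: the paper does not give an explicit proof of this corollary, treating it as an immediate consequence of the preceding proposition that $(f,g)$ is a Galois connection together with the standard fact (already invoked in the definition of $h,h'$ with a reference to \cite{ganterFormalConceptAnalysis2024}) that the two composites of a Galois connection are closure operators. Your write-up simply unpacks that classical argument in detail, and your closing remark that one could alternatively just cite the general theorem is precisely what the paper does.
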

	
	\begin{example}\label{ex:accords_fermeture}
		With the $\texttt{Pokémon}_{2}$ relation, taking the criterion sets $D$ and $DL$, based on the preceding example we have:
		\begin{itemize}
			\item $h(D) = f(g(D)) = f(\{19, 2[10], 36, 47, 58\}) = DL$  
			\item $h(DL) = f(g(DL)) = f(\{19, 2[10], 36, 47, 58\}) = DL$
		\end{itemize}
		With the partitions $\{19, 2[10], 36, 47, 58\}$ and $\{1, 2, 34, 56, 78, 9[10]\}$, we have:
		\begin{itemize}
			\item $h'(\{19, 2[10], 36, 47, 58\}) = g(f(\{19, 2[10], 36, 47, 58\})) = g(DL) = \{19, 2[10], 36, 47, 58\}$
			\item $h'(\{1, 2, 34, 56, 78, 9[10]\}) = g(f(\{1, 2, 34, 56, 78, 9[10]\})) = g(R) = \{1, 234, 56, 78, 9[10]\}$
		\end{itemize}
	\end{example}
	
	\begin{definition}[Agree concepts]\index{Agree concept}
		A relation $r$'s agree concept is a couple $(C, \pi)$ associating a criterion set to a identifier partition: $C \in \mathscr{P}(\mathcal{C})$ and $\pi \in \Pi(Tid(r))$. The elements of this couple shall be linked by the conditions $C = f(\pi)$ , $\pi = g(C) = \pi_C$.
		
		Let $c_a = (C_{c_a}, \pi_{c_a})$ be an agree concept of $r$, we call $\pi_{c_a}$ the extension of $c_a$ (noted $ext(c_a)$) and $C_{c_a}$ its intension (noted $int(c_a)$). The set of all agree concepts of a relation $r$ is noted $\texttt{AGREECONCEPTS}(r)$.
	\end{definition}
	
	\begin{theorem}[Agree concepts lattice]\index{Agree concepts lattice}
		Let $\texttt{AGREECONCEPTS}(r)$ be the relation $r$'s agree concepts set. $\langle \texttt{AGREECONCEPTS}(r), \leq \rangle$ is a ordered set forming a complete lattice named Agree concepts lattice \footnote{Let $(C_1, \pi_1)$, $(C_2, \pi_2) \in \texttt{AGREECONCEPTS}(r)$, $(C_1, \pi_1) \leq (C_2, \pi_2) \Leftrightarrow C_1 \subseteq C_2$ (or in other words $\pi_2 \sqsubseteq \pi_1$).}. We define, $\forall P \subseteq \texttt{AGREECONCEPTS}(r)$, the \emph{infimum} or lower bound ($\bigwedge$) and the \emph{supremum} or upper bound ($\bigvee$) as follows:
		\begin{align*}
			\bigwedge P &= (\bigcap_{c_a \in P} int(c_a),\ h'(+_{c_a \in P} ext(c_a))) \\
			\bigvee P &= (h(\bigcup_{c_a \in P} int(c_a)),\ \bullet_{c_a \in P} ext(c_a))
		\end{align*}
	\end{theorem}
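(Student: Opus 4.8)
The plan is to obtain the statement as the specialisation to our setting of the basic theorem of formal concept analysis: for an (antitone) Galois connection between complete lattices, the fixed pairs form a complete lattice whose bounds are computed by pushing the componentwise bounds through the two closure operators. Concretely, I would take as given that $\langle\mathscr{P}(\mathcal{C}),\subseteq\rangle$ is a complete lattice with $\bigwedge=\bigcap$ and $\bigvee=\bigcup$, that $\langle\Pi(Tid(r)),\sqsubseteq\rangle$ is a complete lattice with $\bigwedge=\bullet$ and $\bigvee=+$ (the Partition lattice theorem), that $gc=(f,g)$ is a Galois connection (so $f,g$ are antitone and $C\subseteq f(\pi)\Leftrightarrow\pi\sqsubseteq g(C)$), and that $h=f\circ g$ and $h'=g\circ f$ are closure operators by the preceding corollary; I would also record the Galois identities $f\circ g\circ f=f$ and $g\circ f\circ g=g$.

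First I would check that $\le$ is a partial order: a concept $(C,\pi)$ has $\pi=g(C)$ forced by the first component, so $(C,\pi)\mapsto C$ is injective on $\texttt{AGREECONCEPTS}(r)$ and $\le$ is just the restriction of $\subseteq$ along it; the footnote's alternative description $C_1\subseteq C_2\Leftrightarrow\pi_2\sqsubseteq\pi_1$ then drops out of antitonicity of $f$ and $g$. Next I would prove the pivotal little lemma that for every $C\in\mathscr{P}(\mathcal{C})$ the pair $(h(C),g(C))$ is an agree concept and for every $\pi\in\Pi(Tid(r))$ the pair $(f(\pi),h'(\pi))$ is an agree concept — both are one-line consequences of the Galois identities (e.g. $f(g(C))=h(C)$ and $g(h(C))=g(f(g(C)))=g(C)$) — so that the intensions are exactly the $h$-closed criterion sets and the extensions exactly the $h'$-closed partitions.

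With that in hand I would fix $P=\{(C_i,\pi_i)\}_{i\in I}\subseteq\texttt{AGREECONCEPTS}(r)$, where $\pi_i=g(C_i)$ and each $C_i$ is $h$-closed, and verify the infimum formula. Put $C^{\star}=\bigcap_iC_i$ and $\pi^{\star}=h'(+_i\pi_i)$. Applying the lemma to $+_i\pi_i$ exhibits $(f(+_i\pi_i),h'(+_i\pi_i))$ as a concept, so it is enough to show $f(+_i\pi_i)=C^{\star}$: one inclusion is $f(+_i\pi_i)\subseteq f(\pi_i)=h(C_i)=C_i$ for all $i$; for the reverse, antitonicity gives that $g(\bigcap_iC_i)$ is an upper bound of the $\pi_i$, hence $+_i\pi_i\sqsubseteq g(\bigcap_iC_i)$, hence $\bigcap_iC_i\subseteq h(\bigcap_iC_i)\subseteq f(+_i\pi_i)$ using antitonicity of $f$ and extensivity of $h$. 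Then $(C^{\star},\pi^{\star})$ is a lower bound of $P$ because $C^{\star}\subseteq C_i$, and it dominates every lower bound $(D,\rho)$ because $D\subseteq C_i$ for all $i$ forces $D\subseteq C^{\star}$; so $\bigwedge P=(C^{\star},\pi^{\star})$. The supremum is entirely dual: with $C^{\star\star}=h(\bigcup_iC_i)$ and $\pi^{\star\star}=\bullet_i\pi_i$, the lemma applied to $\bigcup_iC_i$ gives the concept $(h(\bigcup_iC_i),g(\bigcup_iC_i))$, a symmetric two-inclusion argument identifies $g(\bigcup_iC_i)=\bullet_ig(C_i)=\bullet_i\pi_i$, extensivity of $h$ makes $(C^{\star\star},\pi^{\star\star})$ an upper bound of $P$, and isotonicity of $h$ together with closedness of an arbitrary upper bound $(D,\rho)$ (so $\bigcup_iC_i\subseteq D$ and $h(D)=D$) shows it is the least one. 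Since arbitrary infima and suprema then exist, $\langle\texttt{AGREECONCEPTS}(r),\le\rangle$ is a complete lattice with the stated bounds.

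I expect the only real obstacle to be the identity $f(+_i\pi_i)=\bigcap_iC_i$ and its dual $g(\bigcup_iC_i)=\bullet_i\pi_i$ — i.e. checking that $f$ converts the partition sum into intersection and $g$ converts union into the partition product. This is the single place where the completeness of $\Pi(Tid(r))$ supplied by the Partition lattice theorem and the adjunction property of $gc$ are used in an essential way; the remainder is routine manipulation of the closure operators $h$ and $h'$, including the easy edge cases (such as $P=\emptyset$ yielding the top concept $(\mathcal{C},g(\mathcal{C}))$).
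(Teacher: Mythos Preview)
Your approach is correct and is essentially the same as the paper's: the paper's proof is a one-sentence invocation of the fundamental theorem of formal concept analysis (Ganter--Wille), observing that since $gc=(f,g)$ has already been shown to be a Galois connection, the agree concepts are precisely the formal concepts of that connection and hence form a complete lattice with the stated bounds. You have simply unpacked that citation, spelling out the verification that the fixed pairs form a complete lattice and that infima and suprema are given by the displayed formulas.
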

	
	\begin{proof}
		As the couple $gc = (f, g)$ is a Galois connection, the Agree concepts lattice is a concept lattice according to the fundamental theorem of \cite{ganterFormalConceptAnalysis2024}.
	\end{proof}
	
	\begin{example}\label{ex:treillis_concepts_accords_des_novices}
		Figure~\ref{fig:treillis_des_concepts_accords_des_novices} shows the Hasse diagram of the relation $\texttt{Pokémon}_{1}$'s Agree concepts lattice. Assuming $c_a = (RD, \{1, 2, 3, 4, 5, 6, 78, 9, [10]\})$ and $c_b = (RL, \{1, 2, 34, 5, 6, 7, 8, 9, [10]\})$, we have:
		\begin{align*}
			c_a \wedge c_b = & (RD \cap RL, h'(\{19, 2[10], 36, 47, 58\} + \{1, 234, 56, 78, 9[10]\})) \\
			= & (R, h'(\{1, 234, 56, 78, 9[10]\})) = (R,\{1, 234, 56, 78, 9[10]\}) \\
			c_a \vee c_b   = & (h(RD \cup RL),\{19, 2[10], 36, 47, 58\} \bullet \{1, 234, 56, 78, 9[10]\}) \\
			= & (h(RDL), \{1, 2, 3, 4, 5, 6, 7, 8, 9, [10]\}) = (RDL, \{1, 2, 3, 4, 5, 6, 7, 8, 9, [10]\}) \\
		\end{align*}
	\end{example}
	
	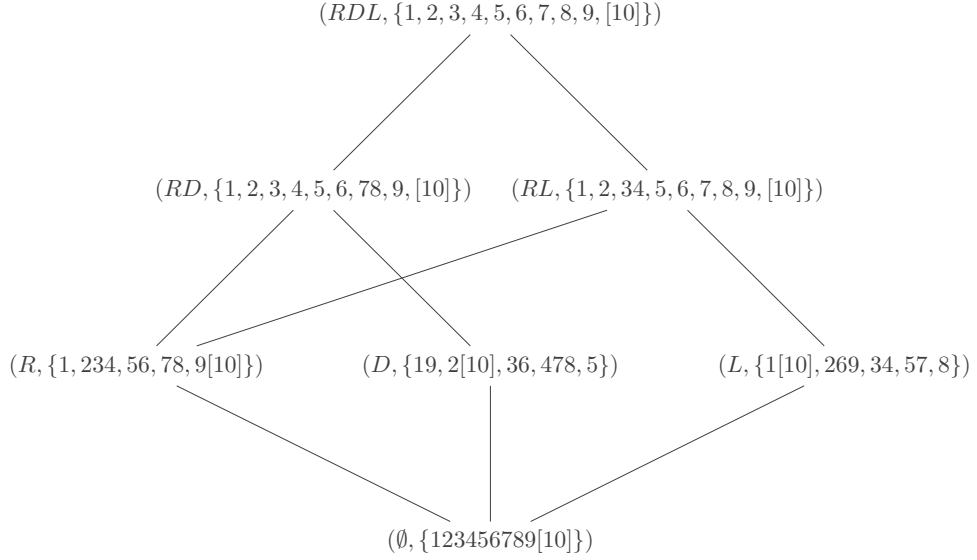
\begin{figure}
		\centering
		\resizebox{1\textwidth}{!}{
			\begin{tikzpicture}[
				line join=bevel,
				]
				
				%% Nodes
				% Niveau 0
				\node (bottom) at (150pt, 0pt) {$(\emptyset, \{123456789[10]\})$};
				% Niveau 1
				\node (p11) at (0pt, 75pt) {$(R, \{1, 234, 56, 78, 9[10]\})$};
				\node (p12) at (150pt, 75pt) {$(D, \{19, 2[10], 36, 478, 5\})$};
				\node (p13) at (300pt, 75pt) {$(L, \{1[10], 269, 34, 57, 8\})$};
				% Niveau 2
				\node (p21) at (75pt, 150pt) {$(RD, \{1, 2, 3, 4, 5, 6, 78, 9, [10]\})$};
				\node (p22) at (225pt, 150pt) {$(RL, \{1, 2, 34, 5, 6, 7, 8, 9, [10]\})$};
				% Niveau 3
				\node (top) at (150pt, 225pt) {$(RDL, \{1, 2, 3, 4, 5, 6, 7, 8, 9, [10]\})$};
				
				%% Draws
				% Niveau 0 -> 1
				\draw [] (bottom) -- (p11);
				\draw [] (bottom) -- (p12);
				\draw [] (bottom) -- (p13);
				% Niveau 1 -> 2
				\draw [] (p11) -- (p21);
				\draw [] (p11) -- (p22);
				
				\draw [] (p12) -- (p21);
				
				\draw [] (p13) -- (p22);
				% Niveau 2 -> 3
				\draw [] (p21) -- (top);
				
				\draw [] (p22) -- (top);
			\end{tikzpicture}
		}
		\caption{Novice Agree concepts lattice Hasse diagram}\label{fig:treillis_des_concepts_accords_des_novices}
	\end{figure}
	
	\begin{example}\label{ex:treillis_concepts_accords_des_experts}
		Figure~\ref{fig:treillis_des_concepts_accords_des_experts} shows the Hasse diagram of the relation $\texttt{Pokémon}_{2}$'s Agree concepts lattice. The couple $(DL, \{19, 2[10], 36, 47, 58\})$ is an Agree concept since according to the examples~\ref{ex:accords_f_g} and~\ref{ex:accords_fermeture} we have $g(DL) = \{19, 2[10], 36, 47, 58\}$ and $f(\{19, 2[10], 36, 47, 58\}) = DL$. Whereas the couple $(D, \{19, 2[10], 36, 47, 58\})$ is not an Agree concept since $f(\{1, 2, 35, 4\}) \neq D$. Let $c_a = (DL, \{19, 2[10], 36, 47, 58\})$ and $c_b = (R, \{1, 234, 56, 78, 9[10]\})$ be two Agree concepts, we have: 
		\begin{align*}
			c_a \wedge c_b = & (DL \cap R, h'(\{19, 2[10], 36, 47, 58\} + \{1, 234, 56, 78, 9[10]\})) \\
			= & (\emptyset, h'(\{123456789[10]\})) = (\emptyset, \{123456789[10]\}) \\
			c_a \vee c_b   = & (h(DL \cup R), \{19, 2[10], 36, 47, 58\} \bullet \{1, 234, 56, 78, 9[10]\}) \\
			= & (h(RDL), \{1, 2, 3, 4, 5, 6, 7, 8, 9, [10]\}) = (RDL, \{1, 2, 3, 4, 5, 6, 7, 8, 9, [10]\}) \\
		\end{align*}
	\end{example}
	
	\begin{figure}
		\centering
		\resizebox{1\textwidth}{!}{
			\begin{tikzpicture}[
				line join=bevel,
				]
				
				%% Nodes
				% Niveau 0
				\node (bottom) at (150pt, 0pt) {$(\emptyset, \{123456789[10]\})$};
				% Niveau 1
				\node (p11) at (0pt, 75pt) {$(R, \{1, 234, 56, 78, 9[10]\})$};
				\node (p13) at (300pt, 75pt) {$(L, \{1369, 2[10], 47, 58\})$};
				% Niveau 2
				\node (p21) at (225pt, 150pt) {$(DL, \{19, 2[10], 36, 47, 58\})$};
				% Niveau 3
				\node (top) at (150pt, 225pt) {$(RDL, \{1, 2, 3, 4, 5, 6, 7, 8, 9, [10]\})$};
				
				%% Draws
				% Niveau 0 -> 1 (+2)
				\draw [] (bottom) -- (p11);
				%\draw [] (bottom) -- (p12);
				\draw [] (bottom) -- (p21);
				\draw [] (bottom) -- (p13);
				% Niveau 1 -> 2 (+3)
				\draw [] (p11) -- (top);
				
				\draw [] (p13) -- (p21);
				% Niveau 2 -> 3
				\draw [] (p21) -- (top);
			\end{tikzpicture}
		}
		\caption{Expert Agree concepts lattice Hasse diagram}\label{fig:treillis_des_concepts_accords_des_experts}
	\end{figure}
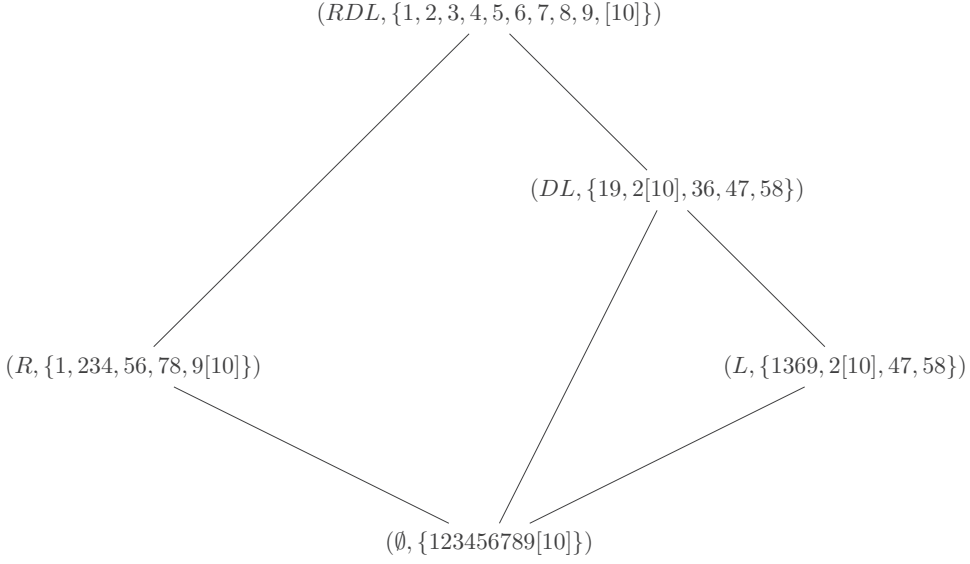
	
	\begin{proposition}\label{prop:partitions_egales}\index{Equivalent partitions}
		For any criterion set $C \subseteq \mathcal{C}$, the associated partition $\pi_C$ is identical to its closure's partition:
		\[\forall C \subseteq \mathcal{C}, \pi_C = \pi_{h(C)}\]
	\end{proposition}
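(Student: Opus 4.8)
The plan is to unfold the notation and reduce the claim to the identity $g\circ f\circ g = g$, which holds for every Galois connection. First, by definition of the partition associated with a criterion set we have $\pi_{h(C)} = g(h(C))$; since $h = f\circ g$, this is $\pi_{h(C)} = g\bigl(f(g(C))\bigr) = (g\circ f\circ g)(C)$. Because $\pi_C = g(C)$, proving the proposition amounts to showing $(g\circ f\circ g)(C) = g(C)$, i.e.\ that $h'(\pi_C) = \pi_C$, where $h' = g\circ f$.

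Next I would derive this from the closure-operator properties already recorded. Writing $\pi_C = g(C)$, extensivity of $h'$ (item~2 of the corollary following the definition of $h$ and $h'$) gives $\pi_C \sqsubseteq h'(\pi_C)$. For the converse, extensivity of $h$ gives $C \subseteq h(C) = f(g(C))$; applying $g$, which is order-reversing as one half of the Galois connection $gc = (f,g)$ (a larger criterion set induces a finer identifier partition), we obtain $g\bigl(f(g(C))\bigr) \sqsubseteq g(C)$, that is $h'(\pi_C) \sqsubseteq \pi_C$. Antisymmetry of $\sqsubseteq$ in the partition lattice $\langle\Pi(Tid(r)),\sqsubseteq\rangle$ then yields $h'(\pi_C) = \pi_C$, and hence $\pi_{h(C)} = (g\circ f\circ g)(C) = h'(\pi_C) = \pi_C$.

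An equivalent, slightly more conceptual phrasing: every partition of the form $g(C)$ lies in the image of $g$, and in a Galois connection the image of $g$ is exactly the set of fixed points of $h' = g\circ f$; thus $h'(g(C)) = g(C)$, which is again what is needed after the rewriting $\pi_{h(C)} = g\bigl(f(g(C))\bigr)$.

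There is no real obstacle here: the statement is pure bookkeeping with the Galois connection $gc = (f,g)$ and the extensivity clause of the corollary. The only point requiring care is the variance of $g$: because $g$ reverses the order, the inclusion $C \subseteq h(C)$ must be translated into $g(h(C)) \sqsubseteq g(C)$ and not the reverse; getting this direction wrong would collapse the argument.
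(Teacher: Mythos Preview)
Your proof is correct and follows the same route as the paper: unfold $\pi_{h(C)} = g(f(g(C)))$ and invoke the Galois-connection identity $g\circ f\circ g = g$. The only difference is that the paper simply cites this identity from the FCA literature, whereas you derive it in place from extensivity of $h$ and $h'$ together with the antitonicity of $g$; this is a harmless elaboration, not a different argument.
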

	
	With the preceding proposition, we know that a criterion set $C$'s closure can be seen as the largest superset of $C$ with the same partition.
	
	\begin{proof}
		By definition $\forall C \subseteq \mathcal{C}$, $\pi_C = g(C)$ and $h(C) = f(g(C))$. Thus we have $\pi_{h(C)} = g(f(g(C))$. However, as the couple $gc = (f,\ g)$ is a Galois connection, we have $g \circ f \circ g  = g$ (\cite{ganterFormalConceptAnalysis2024}). Then, $\pi_{h(C)} = g(f(g(C))) = g(C) = \pi_C$.
	\end{proof}
	
	\begin{example}
		With the $\texttt{Pokémon}_{2}$ relation, assuming $D$ as criterion set, and according to examples~\ref{ex:accords_f_g} and~\ref{ex:accords_fermeture}, we have: $\pi_{D} = g(D) = \{19, 2[10], 36, 47, 58\}$ and $\pi_{h(D)} = g(h(D)) = g(RDL) = \{1, 2, 3, 4, 5, 6, 7, 8, 9, [10]\}$.
	\end{example}
	
	\subsection{Skyline concepts lattice and Skycube}
	
	The Skyline concepts lattice is a constrained Agree concepts lattice. We present a fundamental property to partially materialize a skycube by pruning some skycuboids. Then, we explain how such skycuboids can be easily rebuilt.
	
	\subsubsection{Skyline concepts lattice}
	
	Let $\pi_C$ be a partition of $r$ on $C$. By definition, $\forall t, [t]_C \in \pi_C$, all $[t]_C$ are indistinguishable on $C$ (all share the same projection on $C$). If $t$ is not dominated on $C$, all the tuples in its class will not be dominated (reciprocally dominated). Thus, dominance can be asserted for a single tuple of the class to know if the whole tuple set belongs in the skycuboid on $C$ or not. In order to optimise the skycuboid on $C$ computation from its partition $\pi_C$, we only keep a single tuple from each equivalence class. This set is noted $reps(\pi_C)$. This way, we reduce the input size by pruning all tuples leading to a lot of useless comparisons. To implement the specificities of the Skyline computation on a partition, we introduce the following operator.
	
	\begin{definition}[$\Pi\text{-}S$ operator]\index{$\Pi\text{-}S$ operator}
		Let $C$ a criterion set and $\pi$ a partition of $r$. We define new operator $\Pi\text{-}S$ as follow:
		\begin{align*}
			\Pi\text{-}S_C(\pi_C) &= \{[t_i] \in \pi_C \mid \forall t_j \in r \text{ we have } t_j \nsucc_C t_i \} \\
			&= \{[t_i] \in \pi_C \mid t_i \in \Pi\text{-}S_C(r)\}
		\end{align*}
	\end{definition}
	
	\begin{definition}[Skyline concept]\index{Skyline concept}
		Let $c_a = (C, \pi) \in \textsc{AGREECONCEPTS}(r)$ be a relation $r$'s Agree concept. The associated Skyline concept $c_s$ is defined as follows:
		\[c_s = (C, \Pi\text{-}S_{C}(\pi))\]
		
		The Skyline concepts and Agree concept counts are strictly equals. We note $\texttt{SKYLINECONCEPTS}(r)$ the set of skyline concepts associated to $r$'s Agree concepts.
	\end{definition}
	
	Skyline concepts are Agree concepts with constrained partitions. Thus, this type of concept is not necessarily ordered by the relation $\sqsubseteq$ between partitions. The order relation $\leq$ between Skylines concepts is defined as $\forall (C_1, \Pi\text{-}S_{C_1}(\pi_1))$, $(C_2, \Pi\text{-}S_{C_2}(\pi_2)) \in \texttt{SKYLINECONCEPTS}(r)$, thus $(C_1, \Pi\text{-}S_{C_1}(\pi_1)) \leq (C_2, \Pi\text{-}S_{C_2}(\pi_2)) \Leftrightarrow C_1 \subseteq C_2$.
	
	\begin{definition}[Skyline concepts lattice]\index{Skyline concepts lattice}
		$\langle \texttt{SKYLINECONCEPTS}(r), \leq\rangle$ is an ordered set forming a complete lattice named Skyline concepts lattice. It is isomorphic to the Agree concepts lattice.
	\end{definition}
	
	\begin{example}
		Figure~\ref{fig:treillis_des_concepts_skylines_des_novices} shows the $\texttt{Pokémon}_{1}$ relation's Skyline concepts lattice Hasse diagram. The $\texttt{Pokémon}_{1}$ relation Skycube lattice reduced representation, highlighting the evolution without the dimmed Skycuboid, is shown by figure~\ref{fig:skycube_treillis_reduit_3_1}.
	\end{example}

	\begin{figure}
		\centering
		\resizebox{1\textwidth}{!}{
			\begin{tikzpicture}[
				line join=bevel,
				]
				
				%% Nodes
				% Niveau 0
				\node (bottom) at (150pt, 0pt) {$(\emptyset, \{123456789[10]\})$};
				% Niveau 1
				\node (p11) at (0pt, 75pt) {$(R, \{56\})$};
				\node (p12) at (150pt, 75pt) {$(D, \{19\})$};
				\node (p13) at (300pt, 75pt) {$(L, \{1[10]\})$};
				% Niveau 2
				\node (p21) at (75pt, 150pt) {$(RD, \{1, 6\})$};
				\node (p22) at (225pt, 150pt) {$(RL, \{1, 34, 6\})$};
				% Niveau 3
				\node (top) at (150pt, 225pt) {$(RDL, \{1, 3, 6\})$};
				
				%% Draws
				% Niveau 0 -> 1
				\draw [] (bottom) -- (p11);
				\draw [] (bottom) -- (p12);
				\draw [] (bottom) -- (p13);
				% Niveau 1 -> 2
				\draw [] (p11) -- (p21);
				\draw [] (p11) -- (p22);
				
				\draw [] (p12) -- (p21);
				
				\draw [] (p13) -- (p22);
				% Niveau 2 -> 3
				\draw [] (p21) -- (top);
				
				\draw [] (p22) -- (top);
			\end{tikzpicture}
		}
		\caption{Novice's Skyline concepts lattice Hasse diagram}\label{fig:treillis_des_concepts_skylines_des_novices}
	\end{figure}
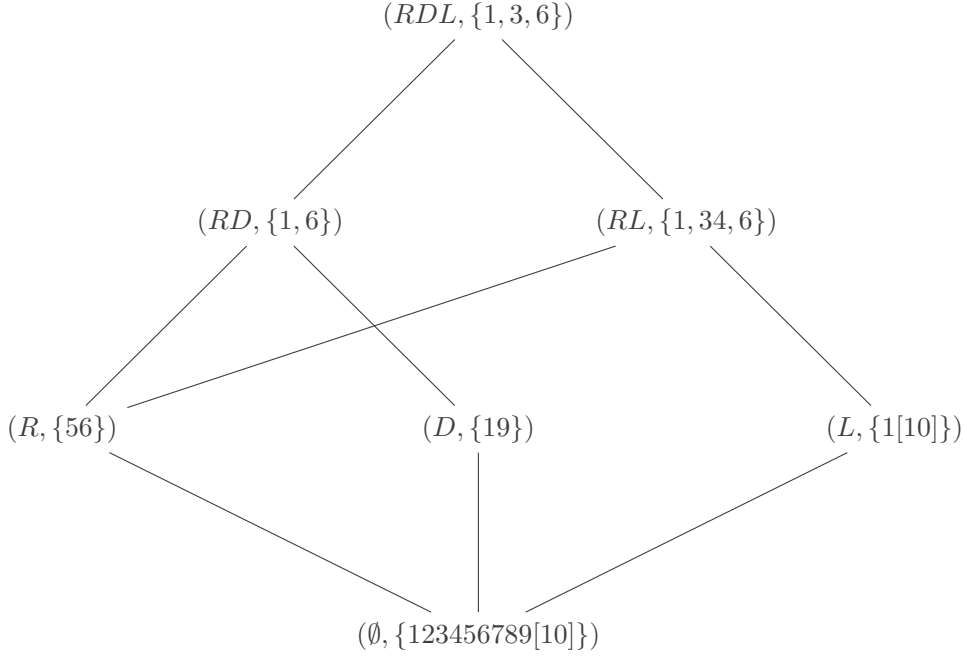
	
	\begin{figure}[htbp]
		\centering
		\tiny
		\resizebox{1\textwidth}{!}{
			\begin{tikzpicture}[
				line join=bevel
				]
				
				%% Nodes
				% Niveau 0
				\node (bottom) at (150pt, 0pt) {$\emptyset$};
				% Niveau 1
				\node (n11) at (50pt, 75pt)
				{
					\begin{tabular}{c|c}
						\toprule
						\texttt{Id} & \texttt{R} \\
						\midrule
						$5$ & $1$ \\
						$6$ & $1$ \\
						\bottomrule
					\end{tabular}
				};
				\node (n12) at (150pt, 75pt)
				{
					\begin{tabular}{c|c}
						\toprule
						\texttt{Id} & \texttt{D} \\
						\midrule
						$1$ & $25$ \\
						$9$ & $25$ \\
						\bottomrule
					\end{tabular}
				};
				\node (n13) at (250pt, 75pt)
				{
					\begin{tabular}{c|c}
						\toprule
						\texttt{Id} & \texttt{L} \\
						\midrule
						$1$ & $30$ \\
						$10$ & $30$ \\
						\bottomrule
					\end{tabular}
				};
				% Niveau 2
				\node (n21) at (50pt, 150pt)
				{
					\begin{tabular}{c|cc}
						\toprule
						\texttt{Id} & \texttt{R} & \texttt{D} \\
						\midrule
						$1$ & $5$ & $25$ \\
						$6$ & $1$ & $35$ \\
						\bottomrule
					\end{tabular}
				};
				\node (n22) at (150pt, 150pt)
				{
					\begin{tabular}{c|cc}
						\toprule
						\texttt{Id} & \texttt{R} & \texttt{L} \\
						\midrule
						$1$ & $5$ & $30$ \\
						$3$ & $4$ & $40$ \\
						$4$ & $4$ & $40$ \\
						$6$ & $1$ & $50$ \\
						\bottomrule
					\end{tabular}
				};
				\node (n23) [lightgray] at (250pt, 150pt)
				{
					\begin{tabular}{c|cc}
						\toprule
						\texttt{Id} & \texttt{D} & \texttt{L} \\
						\midrule
						$1$ & $25$ & $30$ \\
						\bottomrule
					\end{tabular}
				};
				% Niveau 3
				\node (top) at (150pt, 225pt)
				{
					\begin{tabular}{c|ccc}
						\toprule
						\texttt{Id} & \texttt{R} & \texttt{D} & \texttt{L} \\
						\midrule
						$1$ & $5$ & $25$ & $30$ \\
						$3$ & $4$ & $35$ & $40$ \\
						$6$ & $1$ & $35$ & $50$ \\
						\bottomrule
					\end{tabular}
				};
				
				%% Draws
				% Niveau 0 -> 1
				\draw [stealth-] (bottom) -- (n11.south);
				\draw [stealth-] (bottom) -- (n12.south);
				\draw [stealth-] (bottom) -- (n13.south); 
				% Niveau 1 -> 2
				\draw [stealth-] (n11.90) -- (n21.270);
				\draw [stealth-] (n11.65) -- (n22.245);
				
				\draw [stealth-] (n12.115) -- (n21.290);
				\draw [stealth-, lightgray] (n12.65) -- (n23.245);
				
				\draw [stealth-] (n13.115) -- (n22.290);
				\draw [stealth-, lightgray] (n13.90) -- (n23.270);
				% Niveau 2 -> 3
				\draw [stealth-] (n21.north) -- (top);
				\draw [stealth-] (n22.north) -- (top);
				\draw [stealth-, lightgray] (n23.north) -- (top);
			\end{tikzpicture}
		}
		\caption{$\texttt{Pokémon}_{1}$ relation Skycube lattice reduced representation}\label{fig:skycube_treillis_reduit_3_1}
	\end{figure}
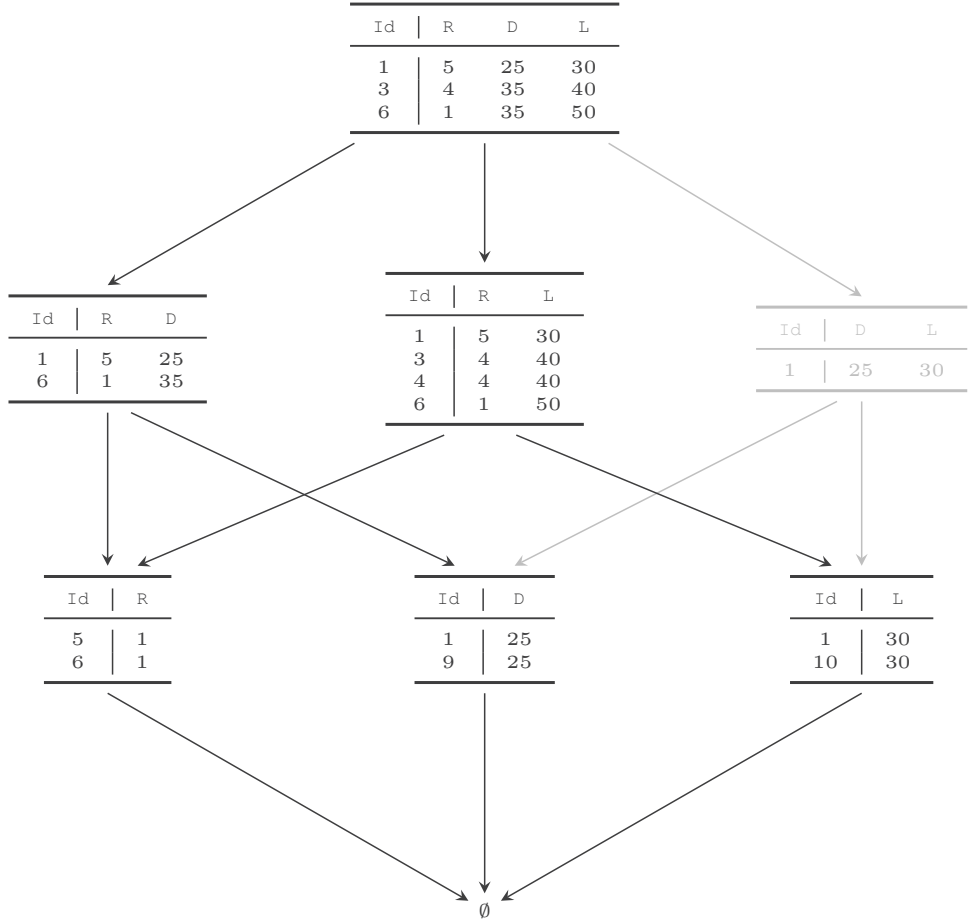
	
	\begin{example}
		Figure~\ref{fig:treillis_des_concepts_skylines_des_experts} shows the $\texttt{Pokémon}_{2}$ relation's Skyline concepts lattice Hasse diagram. According to the example~\ref{ex:treillis_concepts_accords_des_experts}, the couple $c_a = (DL, \{19, 2[10], 36, 47, 58\})$ is an agree concept. $c_s$, the associated Skyline concept, is $c_s = (DL, \Pi\text{-}S_{DL}(\{19, 2[10], 36, 47, 58\})) = (DL, \{19\})$. The other identifiers are pruned from the extension by $\Pi\text{-}S$ operator since corresponding tuples are dominated. $\texttt{Pokémon}_{2}$ relation's Skycube lattice reduced representation, highlighting the evolution without the dimmed Skycuboid, is shown by figure~\ref{fig:skycube_treillis_reduit_3_2}.
	\end{example}
	
	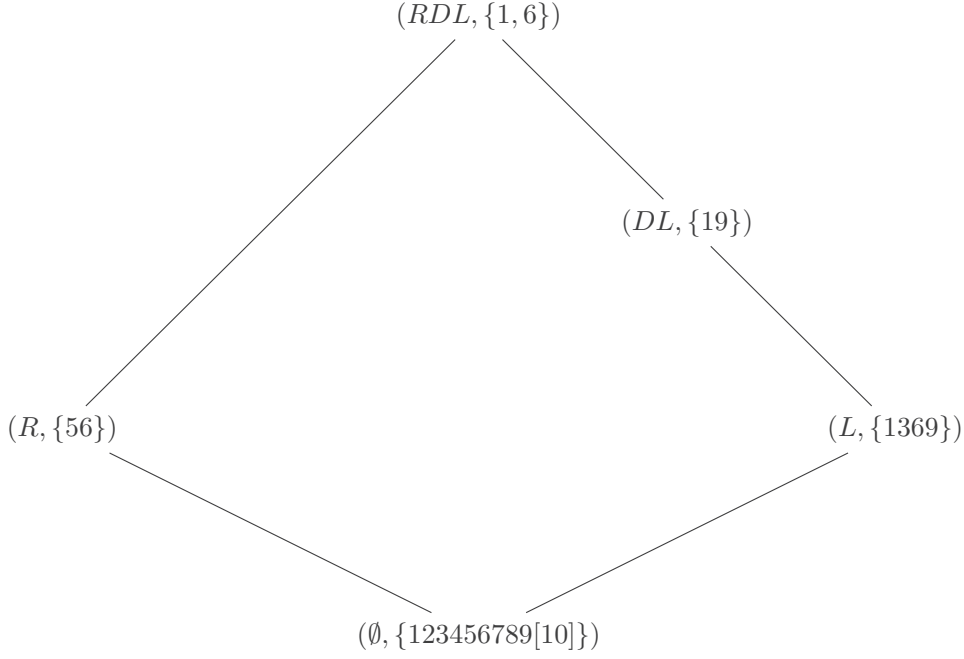
\begin{figure}
		\centering
		\resizebox{1\textwidth}{!}{
			\begin{tikzpicture}[
				line join=bevel,
				]
				
				%% Nodes
				% Niveau 0
				\node (bottom) at (150pt, 0pt) {$(\emptyset, \{123456789[10]\})$};
				% Niveau 1
				\node (p11) at (0pt, 75pt) {$(R, \{56\})$};
				\node (p13) at (300pt, 75pt) {$(L, \{1369\})$};
				% Niveau 2
				\node (p21) at (225pt, 150pt) {$(DL, \{19\})$};
				% Niveau 3
				\node (top) at (150pt, 225pt) {$(RDL, \{1, 6\})$};
				
				%% Draws
				% Niveau 0 -> 1 (+2)
				\draw [] (bottom) -- (p11);
				%\draw [] (bottom) -- (p12);
				%\draw [] (bottom) -- (p21);
				\draw [] (bottom) -- (p13);
				% Niveau 1 -> 2 (+3)
				\draw [] (p11) -- (top);
				
				\draw [] (p13) -- (p21);
				% Niveau 2 -> 3
				\draw [] (p21) -- (top);
			\end{tikzpicture}
		}
		\caption{Expert's Skyline concepts lattice Hasse diagram}\label{fig:treillis_des_concepts_skylines_des_experts}
	\end{figure}
	
	\begin{figure}[htbp]
		\centering
		\tiny
		\resizebox{1\textwidth}{!}{
			\begin{tikzpicture}[
				line join=bevel
				]
				
				%% Nodes
				% Niveau 0
				\node (bottom) at (150pt, 0pt) {$\emptyset$};
				% Niveau 1
				\node (n11) at (50pt, 75pt)
				{
					\begin{tabular}{c|c}
						\toprule
						\texttt{Id} & \texttt{R} \\
						\midrule
						$5$ & $1$ \\
						$6$ & $1$ \\
						\bottomrule
					\end{tabular}
				};
				\node (n12) [lightgray] at (150pt, 75pt)
				{
					\begin{tabular}{c|c}
						\toprule
						\texttt{Id} & \texttt{D} \\
						\midrule
						$1$ & $20$ \\
						$9$ & $20$ \\
						\bottomrule
					\end{tabular}
				};
				\node (n13) at (250pt, 75pt)
				{
					\begin{tabular}{c|c}
						\toprule
						\texttt{Id} & \texttt{L} \\
						\midrule
						$1$ & $30$ \\
						$3$ & $30$ \\
						$6$ & $30$ \\
						$9$ & $30$ \\
						\bottomrule
					\end{tabular}
				};
				% Niveau 2
				\node (n21) [lightgray] at (50pt, 150pt)
				{
					\begin{tabular}{c|cc}
						\toprule
						\texttt{Id} & \texttt{R} & \texttt{D} \\
						\midrule
						$1$ & $5$ & $20$ \\
						$6$ & $1$ & $30$ \\
						\bottomrule
					\end{tabular}
				};
				\node (n22) [lightgray] at (150pt, 150pt)
				{
					\begin{tabular}{c|cc}
						\toprule
						\texttt{Id} & \texttt{R} & \texttt{L} \\
						\midrule
						$6$ & $1$ & $30$ \\
						\bottomrule
					\end{tabular}
				};
				\node (n23) at (250pt, 150pt)
				{
					\begin{tabular}{c|cc}
						\toprule
						\texttt{Id} & \texttt{D} & \texttt{L} \\
						\midrule
						$1$ & $20$ & $30$ \\
						$9$ & $20$ & $30$ \\
						\bottomrule
					\end{tabular}
				};
				% Niveau 3
				\node (top) at (150pt, 225pt)
				{
					\begin{tabular}{c|ccc}
						\toprule
						\texttt{Id} & \texttt{R} & \texttt{D} & \texttt{L} \\
						\midrule
						$1$ & $5$ & $20$ & $30$ \\
						$6$ & $3$ & $30$ & $30$ \\
						\bottomrule
					\end{tabular}
				};
				
				%% Draws
				% Niveau 0 -> 1 (+2)
				\draw [stealth-] (bottom) -- (n11.south);
				\draw [stealth-, lightgray] (bottom) -- (n12.south);
				%\draw [stealth-] (bottom) -- (n23.south);
				\draw [stealth-] (bottom) -- (n13.south); 
				% Niveau 1 -> 2 (+3)
				\draw [stealth-, lightgray] (n11.90) -- (n21.270);
				\draw [stealth-, lightgray] (n11.65) -- (n22.245);
				\draw [stealth-] (n11.77) -- (top);
				
				\draw [stealth-, lightgray] (n12.115) -- (n21.290);
				\draw [stealth-, lightgray] (n12.65) -- (n23.245);
				
				\draw [stealth-, lightgray] (n13.115) -- (n22.290);
				\draw [stealth-] (n13.90) -- (n23.280);
				% Niveau 2 -> 3
				\draw [stealth-, lightgray] (n21.north) -- (top);
				\draw [stealth-, lightgray] (n22.north) -- (top);
				\draw [stealth-] (n23.north) -- (top);
			\end{tikzpicture}
		}
		\caption{$\texttt{Pokémon}_{2}$ relation's Skycube lattice reduced representation}\label{fig:skycube_treillis_reduit_3_2}
	\end{figure}
	
	\subsubsection{Skycube partial materialization}\label{ssec:materialisation_partielle_du_skycube}
	
	In this subsubsection, we propose the Skyline concept lattice as Skycubes' partial materialization. In order to obtain a reduced representation, the idea is to prune the most easily rebuildable Skycuboids.
	
	\begin{definition}[Disagree condition]\index{Disagree condition}
		Let $r$ be a relation and $C$ a criterion set. The disagree condition, or $DAC_C(r)$ is verified when:
		\[\nexists t_i,\ t_j \in r \text{ as } t_i[C] = t_j[C] \text{ with } i \neq j,\ C \subseteq \mathcal{C}\]
		
		When $DAC_C(r)$ is verified, $DAC_X(r)$ is also verified with $X$ being a superset of $C$.
	\end{definition}
	
	This disagree condition is a restricted version of the distinct value condition (\cite{yuanEfficientComputationSkyline2005}) since it is applied to projections and not to the individual values of each criterion.
	
	\begin{definition}[Dominance under disagree condition]\label{def:dominance_sous_condition_de_non_accord}\index{Dominance under disagree condition}
		Let $C \subseteq \mathcal{C}$ be a criterion set and $r$ a relation. The $t \succ_C t$ dominance relationship under $DAC_C(r)$ can be simplified as:
		\[\text{Let } C \text{ be as } DAC_C(r), \forall t_i, t_j \in r \text{ we have } t_j \succ_C t_i \Leftrightarrow \forall c \in C,\ t_j[c] \leq t_i[c] \text{ with } i \neq j\]
	\end{definition}
	
	\begin{lemma}\label{lem:skyline_dimension_superieure}
		Let $r$ be relation. For any $C \subseteq \mathcal{C}$ criterion set meeting $DAC_C(r)$, we have $S_C(r) \subseteq DAC_{C \cup c_0}(r)$ with $c_0 \in \mathcal{C} \smallsetminus C$.
	\end{lemma}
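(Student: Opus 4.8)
The plan is to prove the asserted inclusion one tuple at a time, after reducing the dominance test in each of the two subspaces to the simplified form of Definition~\ref{def:dominance_sous_condition_de_non_accord}. First I would record the preliminary fact, already noted right after the Disagree condition definition, that $DAC_C(r)$ forces $DAC_{C\cup c_0}(r)$ for every $c_0\in\mathcal{C}\smallsetminus C$: two distinct tuples sharing their projection on the larger set $C\cup\{c_0\}$ would in particular share it on $C$, which $DAC_C(r)$ forbids. Consequently the simplified dominance characterisation of Definition~\ref{def:dominance_sous_condition_de_non_accord} is available in \emph{both} subspaces: for distinct $t_i,t_j\in r$, $t_j\succ_C t_i$ iff $t_j[c]\le t_i[c]$ for all $c\in C$, and likewise $t_j\succ_{C\cup c_0}t_i$ iff $t_j[c]\le t_i[c]$ for all $c\in C\cup\{c_0\}$.

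Next I would carry out the core step, which is short. Let $t_i$ be a tuple belonging to $S_C(r)$, so that no $t_j$ with $j\neq i$ satisfies $t_j\succ_C t_i$. Suppose, for contradiction, that $t_i$ does not belong to $S_{C\cup c_0}(r)$; then some $t_j$, $j\neq i$, satisfies $t_j\succ_{C\cup c_0}t_i$, that is $t_j[c]\le t_i[c]$ for every $c\in C\cup\{c_0\}$, and therefore for every $c\in C$. By the simplified characterisation in the subspace $C$ this says exactly $t_j\succ_C t_i$, contradicting $t_i\in S_C(r)$. Hence no such $t_j$ exists, $t_i\in S_{C\cup c_0}(r)$, and so $S_C(r)\subseteq S_{C\cup c_0}(r)$ under $DAC_C(r)$, the larger subspace inheriting the disagree condition from the preliminary observation.

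The delicate point---and the reason this merits a lemma rather than being folded into the non-monotonicity remark of Section~\ref{ssec:skycube}---is that this upward stability of Skyline membership is \emph{false} for an arbitrary relation: enlarging the subspace can eliminate a Skyline object of $C$ precisely when it is beaten, on $C\cup\{c_0\}$, by a tuple that \emph{agrees} with it on $C$ (equal $C$-projection, hence not dominating on $C$, yet strictly better on the new coordinate $c_0$). The hard part is therefore not the computation but recognising that the disagree condition is exactly what neutralises this case: under $DAC_C(r)$ no two tuples coincide on $C$, so ``no worse on every coordinate of $C$'' already forces ``strictly better somewhere in $C$,'' and the restriction argument above goes through cleanly. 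I would close by noting that the converse inclusion can fail---a two-point staircase has a strictly larger joint-subspace Skyline than either one-dimensional one---so the statement is genuinely one-sided, which is what later legitimises pruning the Skycuboids lying above a $DAC$-satisfying node and rebuilding them from the ones below.
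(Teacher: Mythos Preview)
Your proof is correct and follows essentially the same route as the paper's: both reduce the claim to the simplified dominance characterisation of Definition~\ref{def:dominance_sous_condition_de_non_accord} and then observe that a witness against dominance in $C$ is automatically a witness in $C\cup\{c_0\}$. The only cosmetic difference is that the paper argues directly (from $t_j\nsucc_C t_i$ it exhibits $c\in C$ with $t_j[c]>t_i[c]$, hence $c\in C\cup\{c_0\}$ and $t_j\nsucc_{C\cup c_0}t_i$), whereas you phrase the same implication contrapositively as a contradiction; your preliminary step propagating $DAC$ to the larger subspace is not strictly needed for the argument (the direction ``$t_j\succ_X t_i\Rightarrow t_j[c]\le t_i[c]$ for all $c\in X$'' holds without $DAC_X$), but it is harmless and matches the remark following the Disagree condition definition.
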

	
	\begin{proof}
		With $DAC_C(r)$, we have: $t_i \in S_C (r) \Rightarrow \forall t_j \in r$ with $i \neq j$, we have $t_j \nsucc_C t_i \Rightarrow \forall t_j \in r$ with $i \neq j$, $\exists c \in C, t_j[c] > t_i[c]$ (cf. definition~\ref{def:dominance_sous_condition_de_non_accord}) $\Rightarrow \forall t_j \in r$ with $i \neq j, \exists c \in C \cup c_0, t_j[c] > t_i[c] \Rightarrow$ $\forall t_j \in r$ with $i \neq j$, we have $t_j \nsucc_{C \cup c_0} t_i \Rightarrow$ $t_i \in S_{C \cup c_0} (r)$. So, with $DAC_C(r)$, we have $S_C (r) \subseteq S_{C \cup c_0} (r)$.
	\end{proof}
	
	The following counter example shows that the reciprocal property is not met.
	
	\begin{example}\label{ex:non_egalite_skylines}
		Let $r = \{ t_1, t_2\}$ be a relation (with $t_1=(0, 1)$ and $t_2=(1, 0)$) and $\mathcal{C} = \{ A, B\}$ the criterion set. $DAC_A(r)$ is met. We have $t_2 \notin S_A(r)$ since $t_1 \succ_A t_2$ while $t_2 \in S_{A \cup B}(r)$.
	\end{example}
	
	\begin{corollaire}
		Let $C$ be as $DAC_C(r)$ is met. By definition, for any $X$ ($C\subseteq X$) meeting $DAC_X(r)$, we have $S_C(r) \subseteq S_X(r)$.
	\end{corollaire}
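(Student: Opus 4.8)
The statement is an immediate iteration of Lemma~\ref{lem:skyline_dimension_superieure}, so the plan is to proceed by induction on the cardinality $k = |X \smallsetminus C|$ of the set of criteria added to $C$ to obtain $X$. The base case $k = 0$ gives $X = C$, for which $S_C(r) \subseteq S_X(r)$ holds trivially (equality). For the inductive step, I would write $X = C' \cup \{c_0\}$ with $C \subseteq C'$, $c_0 \in X \smallsetminus C'$, and $|C' \smallsetminus C| = k-1$.

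The first thing to check is that $DAC_{C'}(r)$ holds: since $C \subseteq C'$ and $DAC_C(r)$ is assumed, the superset remark stated just after the definition of the disagree condition gives $DAC_{C'}(r)$ (and, for the same reason, $DAC_X(r)$, which is also part of the hypothesis). Having this, the induction hypothesis applied to $C'$ yields $S_C(r) \subseteq S_{C'}(r)$, and Lemma~\ref{lem:skyline_dimension_superieure}, applied with the criterion set $C'$ (which satisfies $DAC_{C'}(r)$) and the extra criterion $c_0 \in \mathcal{C} \smallsetminus C'$, yields $S_{C'}(r) \subseteq S_{C' \cup \{c_0\}}(r) = S_X(r)$. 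Chaining the two inclusions by transitivity of $\subseteq$ closes the induction.

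The only point that requires a moment of care — and the closest thing to an obstacle here — is making sure the disagree condition propagates to \emph{every} intermediate criterion set strictly between $C$ and $X$, so that Lemma~\ref{lem:skyline_dimension_superieure} can legitimately be invoked at each rung of the induction; this is exactly guaranteed by the monotonicity remark of the disagree-condition definition. Everything else is the routine bookkeeping of an induction plus transitivity of set inclusion, so no further computation is needed. Note also that Example~\ref{ex:non_egalite_skylines} already shows the inclusion cannot be upgraded to an equality, so no converse is attempted.
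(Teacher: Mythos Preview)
Your proof is correct and is exactly the natural formalisation of what the paper leaves implicit: the corollary is stated without a separate proof, the phrase ``By definition'' signalling that it is meant as an immediate iteration of Lemma~\ref{lem:skyline_dimension_superieure}. Your induction on $|X\smallsetminus C|$, together with the monotonicity remark following the disagree-condition definition to ensure $DAC_{C'}(r)$ at every intermediate step, is precisely the routine argument the paper takes for granted.
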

	
	This property is met for any superset of $C$ up to the set of all criteria $\mathcal{C}$.
	
	\begin{theorem}[Fundamental theorem]\label{theoreme_fondamental}
		Let $r$ be a relation, $C$ a criterion set and $h(C)$ its closure. Then:
		\[\forall C \subseteq \mathcal{C}, S_C(r) \subseteq S_{h(C)}(r)\]
	\end{theorem}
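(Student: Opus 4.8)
The plan is to reduce the statement to two facts already available: the extensivity of the closure operator $h$, namely $C \subseteq h(C)$, and Proposition~\ref{prop:partitions_egales}, which asserts that a criterion set and its closure induce the same partition of $Tid(r)$, i.e.\ $\pi_C = \pi_{h(C)}$. I would fix an arbitrary tuple $t_i \in S_C(r)$ and argue by contradiction: suppose $t_i \notin S_{h(C)}(r)$, so there is some $t_j \in r$ with $j \neq i$ and $t_j \succ_{h(C)} t_i$.

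The first real step is to use the partition equality to locate $t_i$ and $t_j$ in distinct classes. Pareto dominance on $h(C)$ means $t_j$ is at least as good as $t_i$ on every criterion of $h(C)$ and the two projections are not identical, so $t_j[h(C)] \neq t_i[h(C)]$. Hence $i$ and $j$ belong to different classes of $\pi_{h(C)}$, and by Proposition~\ref{prop:partitions_egales} also to different classes of $\pi_C$, which is exactly $t_j[C] \neq t_i[C]$.

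The second step transfers the dominance down to $C$. Since $C \subseteq h(C)$, the fact that $t_j$ is at least as good as $t_i$ on every criterion of $h(C)$ specialises to the same statement on every criterion of $C$; combined with $t_j[C] \neq t_i[C]$ this forces at least one criterion $c \in C$ on which $t_j$ is strictly better, hence $t_j \succ_C t_i$. This contradicts $t_i \in S_C(r)$, and since $t_i$ was arbitrary we obtain $S_C(r) \subseteq S_{h(C)}(r)$.

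The delicate point --- and the reason the closure, rather than an arbitrary superset, makes the inclusion hold --- is the case where the strict improvement witnessing $t_j \succ_{h(C)} t_i$ lives on a criterion of $h(C) \setminus C$, so that it is \emph{a priori} invisible when one looks only at $C$. Proposition~\ref{prop:partitions_egales} is precisely what rules this out: any strict improvement on $h(C)$ makes the $h(C)$-projections distinct, the partition equality transports this distinctness to the $C$-projections, and distinct projections together with weak dominance on $C$ already yield a genuine strict improvement inside $C$. Everything else is routine unwinding of the definition of dominance, with no structural case analysis on the shape of $h(C)$.
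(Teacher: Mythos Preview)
Your proof is correct, and it follows a genuinely different route from the paper's own argument.

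The paper proceeds by passing to the set $E$ of representatives of the partition $\pi_C$, observes that the disagree condition $DAC_X(E)$ holds for every $X$ with $C \subseteq X \subseteq h(C)$ (since distinct representatives have distinct $C$-projections, hence distinct $X$-projections), and then invokes Lemma~\ref{lem:skyline_dimension_superieure} to obtain $S_C(E) \subseteq S_{h(C)}(E)$, from which $S_C(r) \subseteq S_{h(C)}(r)$ is read off via the $\Pi\text{-}S$ operator. Your argument bypasses both the representative set and the lemma entirely: you stay with the full relation, assume a dominator $t_j \succ_{h(C)} t_i$, and use Proposition~\ref{prop:partitions_egales} directly to push the inequality $t_j[h(C)] \neq t_i[h(C)]$ down to $t_j[C] \neq t_i[C]$, which together with the weak dominance inherited from $C \subseteq h(C)$ gives $t_j \succ_C t_i$.

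What each buys: your approach is shorter and conceptually cleaner --- it isolates exactly the one fact (the partition equality) that makes the closure special, and makes transparent why an arbitrary superset of $C$ would not work. The paper's approach, by contrast, is organised around the representative set and the disagree condition, which is the machinery later reused for the algorithmic side (computing Skycuboids from equivalence-class representatives); so its proof is less direct but better aligned with the computational viewpoint of Section~\ref{ssec:materialisation_partielle_du_skycube}.
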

	
	\begin{proof}
		By definition, $t_i \in S_C(r) \Leftrightarrow \exists [t] \in \Pi\text{-}S_C(\pi_C), i \in [t]$. We aim to prove $\Pi\text{-}S_C(\pi_C) \subseteq \Pi\text{-}S_C(\pi_{h(C)})$. We know that $\forall X, C \subseteq X \subseteq h(C), \pi_X = \pi_{C}$. Let $E = \{t_i \in r \mid i \in reps(\pi_{C})\}$ be this partition's representative tuples set. $E$ can only be used to compute Skyline on $X$ on the range $C \subseteq X \subseteq h(C)$, otherwise the equivalent classes no longer being equal, we can not ensure a total and proper Skyline building using equivalence classes elements. Furthermore, $DAC_X(E)$ is met since each representative tuple is distinct from another, so according to ~\ref{lem:skyline_dimension_superieure} lemma, we have $S_C(E) \subseteq S_{h(C)}(E)$ where $S_C(r) \subseteq S_{h(C)}(r)$.
	\end{proof}
	
	\begin{example}
		In the $\texttt{Pokémon}_{2}$ relation, we have $S_{RD}(\texttt{Pokémon}_{2}) = \{t_1, t_6\}$. Moreover, we have $h(RD) = RDL$ and $S_{RDL}(\texttt{Pokémon}_{2}) = \{t_1, t_6\}$. $S_{RD}(\texttt{Pokémon}_{2}) \subseteq S_{h(RD)}(\texttt{Pokémon}_{2})$ is therefore verified. Furthermore, we have $S_R(\texttt{Pokémon}_{2}) = \{t_5, t_6\}$ with $h(R) = R$. So, we note the non-inclusion $S_{R}(\texttt{Pokémon}_{2}) \nsubseteq S_{RD}(\texttt{Pokémon}_{2})$.
	\end{example}
	
	The preceding theorem shows an inclusion between some Skycuboids. More precisely, for any criterion set, there is an inclusion chain from its minimal generators up to its closure. This implies that any Skycuboid can be computed from another Skycuboid containing it. Thus, instead of using the whole relation, we only consider a restricted subset. The Skyline concepts are composed of the largest Skycuboids (depending on the inclusion) allowing to compute others. So, using only Skyline concepts (\emph{i.e.} non-closed Skycuboids pruning), we can easily rebuild missing Skycuboids by calculating the closure from which they can be computed.
	
	Furthermore, due to the Skyline concepts' equivalence classes, we prevent a lot of useless comparisons. Undistinctable tuples only being handled as groups, the computation complexity is no longer based on the tuple count but on the group (or equivalence class) count.
	
	So, the Skyline concepts lattice is a Skycube partial materialization, where non-materialized data can be efficiently computable.	
	
	\section{Emerging Skycube}\label{sec:calcul_du_skycube_emergent}\index{Emerging Skycube}
	
	As for the emerging datacube (\cite{nedjarEmergingCubesTrends2007}), to the best of our knowledge, there is no DBMS built-in solutions to compute an emerging Skycube. For the emerging Skycube to be a directly usable as an operator by a decision-maker, computation algorithms we present shall be integrable in a DBMS. With such a relational consideration, our proposition can be used in conjunction with existing ROLAP analysis tools. The emerging Skycube is therefore a specific datacube or even a specialized emerging datacube, and as with traditional datacube, query, exploration and search are allowed. The emerging datacube concept can be directly transposed as emerging Skycube, regardless of prior Skycube reduction by combining two unicompatible Skycubes, $r_1$ and $r_2$.
	
	An emerging tuple from $r_1$ to $r_2$ has a low measure in $r_1$ but a significative one in $r_2$. However a low measure can be either a low value or a high value, \emph{idem} for a significative measure. With these considerations, we redefine the emerging tuple concept.
	
	\begin{definition}[Emerging Tuple]\index{Emerging Tuple}
		A tuple $t\in CL(r_1 \cup r_2)$ is said emerging from $r_1$ to $r_2$ if and only if it satisfies two constraints, $C_1$ and $C_2$, as following:
		\[
		\left\lbrace
		\begin{array}{l}
			f_{val}(t, r_1) \geq MinThreshold_1 (C_1)  \\
			f_{val}(t, r_2) < MinThreshold_2 (C_2)
		\end{array}
		\right.
		\]
		
		Or:
		\[
		\left\lbrace
		\begin{array}{l}
			f_{val}(t, r_1) < MinThreshold_1 (C_1)  \\
			f_{val}(t, r_2) \geq MinThreshold_2 (C_2)
		\end{array}
		\right.
		\]
	\end{definition}
	
	In an emerging Skycube, the criterion notion, associated to a Skyline preference, coincide with the measure notion. In this document, we will use either depending on the context in order to preserve as much as possible traditional usage: in multi-criteria decision analysis context, we will prefer criterion, but in OLAP context, we will choose measure. Before this approach, emerging datacube computation was based on a single measure attribute in $r_1$, and its equivalent in $r_2$. A single measure now refers to a single criterion also associated to a single Skyline preference, which is quite irrelevant in a multi-criteria decision analysis. With Skycubes computation, and so with several Skyline preferences, including potentially heterogeneous ones, it becomes mandatory to use as many measure attributes in $r_1$ and the same count in $r_2$.
	
	Emerging Skycube computing and storage cost depends on the tuple count in all preserved Skycuboids, and so, also, on related criteria, Skyline preferences, and matching two-fold emergence constraints, still fixed by the decision maker, as for the emerging datacube. The emerging tuple count is bounded by the search space size, in other words by all the possible tuple set. The more emergence constraints are strong, the more the emerging Skycube is reduced. For constraints leading to an emerging Skycube of exploitable size, calibration is important, especially because Skycuboids' cardinality can be high, depending on criteria, even though it is significantly reduced by Skycubes' partial materialization previously seen. That's how measures' count and quality indirectly impact the emerging Skycube computation and storage cost, but also directly by adding dimensions, and corresponding tuples' values, to the input relation.
	
	The main difference with an emerging datacube, is that a Skycube's tuple is considered as emergent if and only if its emergence is observed for all Skycubes' measures. Therefore we introduce the measured emerging tuple.
	
	\begin{definition}[Measured emerging tuple]\label{def:tuple_de_skycube_emergent}\index{Measured emerging tuple}
		Let $r_1$ and $r_2$ be two unicompatible relations, $t \in CL(r_1 \cup r_2)$ a tuple and $f^{\mathcal{M}}$ an agregative function corresponding to the measure $\mathcal{M}$. $t$ is said measured emerging tuple if and only if it satisfies both constraints $C_1^{\mathcal{M}}$ and $C_2^{\mathcal{M}}$ as follows:
		\[
		\left\lbrace
		\begin{array}{l}
			f_{val}^{\mathcal{M}}(t, r_1) \geq MinThreshold_1 (C_1^{\mathcal{M}})  \\
			f_{val}^{\mathcal{M}}(t, r_2) < MinThreshold_2 (C_2^{\mathcal{M}})
		\end{array}
		\right.
		\]
		
		Or:
		\[
		\left\lbrace
		\begin{array}{l}
			f_{val}^{\mathcal{M}}(t, r_1) < MinThreshold_1 (C_1^{\mathcal{M}})  \\
			f_{val}^{\mathcal{M}}(t, r_2) \geq MinThreshold_2 (C_2^{\mathcal{M}})
		\end{array}
		\right.
		\]
	\end{definition}
	
	Traditionally, a measure with values remaining unchanged from a relation to another is irrelevant in an emerging datacube computation, and would be ignored and removed from input relations. However, with a emerging Skycube, we found that such a measure could be a useful criterion elsewhere, when computing Skycubes, and so has to be preserved. Therefore, we define the invariant measure.
	
	\begin{definition}[Invariant measure]\index{Invariant measure}
		Let $r_1$ and $r_2$ be two unicompatible relations and $f^{\mathcal{M}}$ an agregative function corresponding to the measure $\mathcal{M}$. $\mathcal{M}$ is called invariant measure if and only if: $\forall t \in CL(r_1 \cup r_2), f_{val}^{\mathcal{M}}(t, r_1) = f_{val}^{\mathcal{M}}(t, r_2)$.
	\end{definition}

	Before the emerging Skycube computation, Skycuboids are to be merged in a single relation. Indeed, in order to prevent computing all Skycuboids from both Skycubes, we merge both relations (\cite{diopCompositionMiningContexts2002}). This fusion's result is a new relation $r$ with the following schema: $\mathcal{R} = \mathcal{D} \cup \mathcal{M}_1 \cup \mathcal{M}_1' \cup \mathcal{M}_2 \cup \mathcal{M}_2' \dotsc \cup \mathcal{M}_n \cup \mathcal{M}_n'$ where $\mathcal{M}_1$, $\mathcal{M}_2$, \ldots, $\mathcal{M}_n$ are measures of $r_1$ and $\mathcal{M}_1'$, $\mathcal{M}_2'$, \ldots, $\mathcal{M}_n'$ measures of $r_2$. 
	
	Following the choice made for OLAP unpartitionned attributes, we propose to use special value $ALL$ for attributes excluded from Skycuboids. This value is a generalisation of all values of the attribute's domain (\cite{grayDataCubeRelational1996}). In other words, Skycuboids share a same schema and can be grouped in a single relation. Consequently, the fusion will generate lots of $ALL$ values in the relation, corresponding, for each tuple, to that many missing attributes from its initial Skycuboid. Contrary to the traditional emerging datacube, the deduplicating operation is already performed by Skycubes' computation. Thus, this step is skipped, and a merged relation is sufficient, a deduplicated merged relation (cf. table~\ref{tab:relation_fusionnee_dedoublonnee_pokemon}) becomes irrelevant.
	
	\begin{example}
		The merged relation is shown in table~\ref{tab:relation_fusionnee_pokemon} (with \texttt{R} for \texttt{Rarity}, \texttt{D}$_1$ for \texttt{Duration}$_1$, \texttt{L}$_1$ for \texttt{Loss}$_1$, \texttt{D}$_2$ for \texttt{Duration}$_2$ and \texttt{L}$_2$ for \texttt{Loss}$_2$). Separations in the table bound unicompatible Skycuboids tuples from both input Skycubes.
	\end{example}
	
	\begin{table}[htbp]
		\caption{Merged relation $\texttt{Pokémon}_{Mer}$}\label{tab:relation_fusionnee_pokemon}
		\centering
		\begin{minipage}{\linewidth}
			%\resizebox{1\textwidth}{!}{
				\begin{tabular}{c|cccc|ccccc} \toprule
					\texttt{RowId} & \texttt{Tier} & \texttt{Player} & \texttt{Opponent} & \ldots & \texttt{R}\footnote{Let $p$ be the percentage of drop for the Pokémon sequence (which is the multiplication of the percentage of drop for each Pokémon of the sequence), the \texttt{Rarity} score $r$ is calculated, on a scale of 0 to 10, as such: $\text{if } p = 1 \text{ then } r = 0 \text{, else } r = \lfloor max(\frac{(p - 1) \times 10 - (100 - e \times 0.9)}{e}, 0) \rfloor + 1 $.} & \texttt{Duration}\footnote{In total number of turns in the fight.} & \texttt{Loss}\footnote{In percentage.} & \texttt{D}$_2$\footnote{\emph{idem} \texttt{D}$_1$.} & \texttt{L}$_2$\footnote{\emph{idem} \texttt{L}$_1$.} \\
					\midrule
					$1$  & $UU$ & $A$ & $D$ & \ldots & $5$ & $25$ & $30$ & $20$ & $30$      \\
					$2$  & $OU$ & $B$ & $F$ & \ldots & $4$ & $35$ & $40$ & $ALL$ & $ALL$    \\
					$3$  & $OU$ & $C$ & $B$ & \ldots & $1$ & $35$ & $50$ & $30$ & $30$      \\
					\hline
					$4$  & $UU$ & $A$ & $D$ & \ldots & $5$ & $25$ & $ALL$ & $ALL$ & $ALL$   \\
					$5$  & $OU$ & $C$ & $B$ & \ldots & $1$ & $35$ & $ALL$ & $ALL$ & $ALL$   \\
					\hline
					$6$  & $UU$ & $A$ & $D$ & \ldots & $5$ & $ALL$ & $30$ & $ALL$ & $ALL$   \\
					$7$  & $OU$ & $B$ & $F$ & \ldots & $4$ & $ALL$ & $40$ & $ALL$ & $ALL$   \\
					$8$  & $OU$ & $B$ & $A$ & \ldots & $4$ & $ALL$ & $40$ & $ALL$ & $ALL$   \\
					$9$  & $OU$ & $C$ & $B$ & \ldots & $1$ & $ALL$ & $50$ & $ALL$ & $ALL$   \\
					\hline
					$10$ & $UU$ & $A$ & $D$ & \ldots & $ALL$ & $ALL$ & $ALL$ & $20$ & $30$  \\
					$11$ & $UU$ & $E$ & $D$ & \ldots & $ALL$ & $ALL$ & $ALL$ & $20$ & $30$  \\
					\hline
					$12$ & $OU$ & $C$ & $A$ & \ldots & $1$ & $ALL$ & $ALL$ & $ALL$ & $ALL$  \\
					$13$ & $OU$ & $C$ & $B$ & \ldots & $1$ & $ALL$ & $ALL$ & $ALL$ & $ALL$  \\	
					\hline
					$14$ & $UU$ & $A$ & $D$ & \ldots & $ALL$ & $25$ & $ALL$ & $ALL$ & $ALL$ \\
					$15$ & $UU$ & $E$ & $D$ & \ldots & $ALL$ & $25$ & $ALL$ & $ALL$ & $ALL$ \\
					\hline
					$16$ & $UU$ & $A$ & $D$ & \ldots & $ALL$ & $ALL$ & $30$ & $ALL$ & $30$  \\
					$17$ & $OU$ & $B$ & $F$ & \ldots & $ALL$ & $ALL$ & $ALL$ & $ALL$ & $30$ \\
					$18$ & $OU$ & $C$ & $B$ & \ldots & $ALL$ & $ALL$ & $ALL$ & $ALL$ & $30$ \\
					$19$ & $UU$ & $E$ & $D$ & \ldots & $ALL$ & $ALL$ & $ALL$ & $ALL$ & $30$ \\
					$20$ & $UU$ & $E$ & $E$ & \ldots & $ALL$ & $ALL$ & $30$ & $ALL$ & $ALL$  \\
					\bottomrule
				\end{tabular}
				%}
		\end{minipage}
	\end{table}
	
	To be able to clearly count a multiple measure trend reversal, we introduce the measured emergence rate computation.
	
	\begin{definition}[Measured emergence rate]\label{def:taux_d_emergence_mesure}\index{Measured emergence rate}
		Let $r_1$ and $r_2$ be two unicompatible relations, $t \in CL(r_1 \cup r_2)$ a tuple and $f^{\mathcal{M}}$ an agregative function corresponding to the measure $\mathcal{M}$. The measured emergence rate of $t$ from $r_1$ to $r_2$ in regards to the measure $\mathcal{M}$, noted $ER_{\mathcal{M}}(t)$, is defined by:
		\[
		ER_{\mathcal{M}_1}(t) =
		\left
		\lbrace
		\begin{array}{l}
			0 \text{ if } f_{val}^{\mathcal{M}}(t, r_1) = 0 \text{ and } f_{val}^{\mathcal{M}}(t, r_2) = 0 \\
			\infty \text{ if } f_{val}^{\mathcal{M}}(t, r_1) = 0 \text{ and } f_{val}^{\mathcal{M}}(t, r_2) \neq 0 \\
			\dfrac{f_{val}^{\mathcal{M}}(t, r_2)}{f_{val}^{\mathcal{M}}(t, r_1)} \text{ else.}
		\end{array}
		\right.
		\]
	\end{definition}
	
	An emerging Skycube has as many measured emergence rate as measures.
	
	\begin{example}
		Table~\ref{tab:tuples_emergents_mesures_de_la_relation_fusionnee_pokemon_selon_echec} shows measured emerging tuples of relation $\texttt{Pokémon}_{Mer}$ according to \texttt{Loss} with thresholds $MinThreshold_1 = 45$, to exceed, and $MinThreshold_2 = 45$, never to reach, and presented by Skycuboid. Measured emerging tuples, shown after the last table separation, correspond to the Skycuboid with only \texttt{Loss} criterion, and are dismissed from the emerging Skycube (cf. table~\ref{tab:skycube_emergent_de_la_relation_pokemon}) because these tuples are only composed by $ALL$ values for \texttt{Duration} measure.
	\end{example}
	
	\begin{table}[htbp]
		\caption{Relation $\texttt{Pokémon}_{Mer}$'s measured emerging tuples according to \texttt{Loss}}\label{tab:tuples_emergents_mesures_de_la_relation_fusionnee_pokemon_selon_echec}
		\centering
		\begin{tabular}{l|c}
			\toprule
			Measured emerging tuple & $ER_{\texttt{Loss}}$ \\
			\midrule
			$(ALL, ALL, B, \ldots)$ & $1.67$   \\
			$(ALL, C, ALL, \ldots)$ & $1.67$   \\
			$(OU, ALL, ALL, \ldots)$ & $3$   \\
			$(ALL, C, B, \ldots)$ & $1.67$   \\
			$(OU, ALL, B, \ldots)$ & $1.67$   \\
			$(OU, C, ALL, \ldots)$ & $1.67$   \\
			$(OU, C, B, \ldots)$   & $1.67$   \\
			\hline
			$(ALL, ALL, D, \ldots)$ & $\infty$ \\
			$(ALL, A, ALL, \ldots)$ & $\infty$ \\
			$(ALL, E, ALL, \ldots)$ & $\infty$ \\
			$(ALL, A, D, \ldots)$ & $\infty$ \\
			$(ALL, E, D, \ldots)$ & $\infty$ \\
			$(UU, ALL, ALL, \ldots)$ & $\infty$ \\
			$(UU, ALL, D, \ldots)$ & $\infty$ \\
			$(UU, A, ALL, \ldots)$ & $\infty$ \\
			$(UU, E, ALL, \ldots)$ & $\infty$ \\
			$(UU, A, D, \ldots)$ & $\infty$   \\
			$(UU, E, D, \ldots)$ & $\infty$   \\
			\hline
			$(ALL, ALL, F, \ldots)$ & $\infty$ \\
			$(ALL, ALL, B, \ldots)$ & $\infty$ \\
			$(ALL, B, ALL, \ldots)$ & $\infty$ \\
			$(ALL, C, ALL, \ldots)$ & $\infty$ \\
			$(OU, ALL, ALL, \ldots)$ & $\infty$ \\
			$(ALL, B, F, \ldots)$ & $\infty$   \\
			$(ALL, C, B, \ldots)$ & $\infty$   \\
			$(ALL, E, D, \ldots)$ & $\infty$   \\
			$(OU, ALL, F, \ldots)$ & $\infty$ \\
			$(OU, ALL, B, \ldots)$ & $\infty$ \\
			$(OU, B, ALL, \ldots)$ & $\infty$ \\
			$(OU, C, ALL, \ldots)$ & $\infty$ \\
			$(OU, B, F, \ldots)$ & $\infty$ \\
			$(OU, C, B, \ldots)$ & $\infty$ \\
			$(UU, E, D, \ldots)$ & $\infty$ \\
			\bottomrule
		\end{tabular}
	\end{table}
	
	From the merged relation, we obtain the abridged merged relation by:
	\begin{enumerate}
		\item deleting invariant measures, and corresponding tuple values;
		\item deleting tuples from Skycuboids not having at least all criteria from the second Skycube.
	\end{enumerate}
	
	Indeed, invariant measures are irrelevant for the emerging Skycube computation and can be pruned. Furthermore, tuples coming from Skycuboids with less criteria than its equivalent Skycuboid will never generate a trend reversal due to the lack of values to match the equivalent measures.
	
	\begin{example}
		The abridged relation of the merged relation (cf. table~\ref{tab:relation_fusionnee_pokemon}) is shown on the table~\ref{tab:relation_fusionnee_abregee_pokemon} (with \texttt{D}$_1$ for \texttt{Duration}$_1$, \texttt{L}$_1$ for \texttt{Loss}$_1$, \texttt{D}$_2$ for \texttt{Duration}$_2$ and \texttt{L}$_2$) for \texttt{Loss}$_2$). Separations presented in the merged relation's table are kept for practical reasons. This relation is used as input for IDEA platform.
	\end{example}
	
	\begin{table}[htbp]
		\caption{Abridged merged relation $\texttt{Pokémon}_{Mer}^+$}\label{tab:relation_fusionnee_abregee_pokemon}
		\centering
		\begin{minipage}{\linewidth}
			\centering
			%\resizebox{1\textwidth}{!}{
				\begin{tabular}{c|cccc|cccc} \toprule
					\texttt{RowId} & \texttt{Tier} & \texttt{Player} & \texttt{Opponent} & \ldots & \texttt{D}$_1$\footnote{In total number of turns in the fight.} & \texttt{L}$_1$\footnote{In percentage.} & \texttt{D}$_2$\footnote{\emph{idem} \texttt{D}$_1$.} & \texttt{L}$_2$\footnote{\emph{idem} \texttt{L}$_1$.} \\
					\midrule
					$1$  & $UU$& $A$ & $D$ & \ldots & $25$ & $30$ & $20$ & $30$     \\
					$2$  & $OU$ & $B$ & $F$ & \ldots & $35$ & $40$ & $ALL$ & $ALL$  \\
					$3$  & $OU$ & $C$ & $B$ & \ldots & $35$ & $50$ & $30$ & $30$    \\
					\hline
					$4$  & $UU$ & $A$ & $D$ & \ldots & $ALL$ & $ALL$ & $20$ & $30$  \\
					$5$  & $UU$ & $E$ & $D$ & \ldots & $ALL$ & $ALL$ & $20$ & $30$  \\
					\bottomrule
				\end{tabular}
				%}
		\end{minipage}
	\end{table}
	
	\begin{example}
		Table~\ref{tab:skycube_emergent_de_la_relation_pokemon} shows the emerging datacube of relation $\texttt{Pokémon}_{Mer}^+$ with following thresholds:
		\begin{itemize}
			\item \texttt{Duration} measure: $MinThreshold_1 = 35$ and $MinThreshold_2 = 35$;
			\item \texttt{Loss} measure: $MinThreshold_1 = 45$ and $MinThreshold_2 = 45$;
		\end{itemize}
		
		For this example, we select the \texttt{AVG} agregative function (\emph{\cite{peiPushingConvertibleConstraints2004}}). Separations presented in the merged relation's table are kept for practical reasons.
	\end{example}
	
	\begin{table}[htbp]
		\caption{Emerging Skycube of relation $\texttt{Pokémon}_{Mer}^+$}\label{tab:skycube_emergent_de_la_relation_pokemon}
		\centering
		\begin{tabular}{l|cc}
			\toprule
			Emerging tuple & $ER_{\texttt{Duration}}$ & $ER_{\texttt{Loss}}$ \\
			\midrule
			$(ALL, ALL, B, \ldots)$ & $1.17$ & $1.67$      \\
			$(ALL, C, ALL, \ldots)$ & $1.17$ & $1.67$      \\
			$(OU, ALL, ALL, \ldots)$ & $2.33$ & $3$        \\
			$(ALL, C, B, \ldots)$ & $1.17$ & $1.67$        \\
			$(OU, ALL, B, \ldots)$ & $1.17$ & $1.67$       \\
			$(OU, C, ALL, \ldots)$ & $1.17$ & $1.67$       \\
			$(OU, C, B, \ldots)$ & $1.17$ & $1.67$         \\
			\hline
			$(ALL, ALL, F, \ldots)$ & $\infty$ & $\infty$  \\
			$(ALL, ALL, B, \ldots)$ & $\infty$ & $\infty$  \\
			$(ALL, B, ALL, \ldots)$ & $\infty$ & $\infty$  \\
			$(ALL, C, ALL, \ldots)$ & $\infty$ & $\infty$  \\
			$(OU, ALL, ALL, \ldots)$ & $\infty$ & $\infty$ \\
			$(ALL, B, F, \ldots)$ & $\infty$ & $\infty$    \\
			$(ALL, C, B, \ldots)$ & $\infty$ & $\infty$    \\
			$(ALL, E, D, \ldots)$ & $\infty$ & $\infty$    \\
			$(OU, ALL, F, \ldots)$ & $\infty$ & $\infty$   \\
			$(OU, ALL, B, \ldots)$ & $\infty$ & $\infty$   \\
			$(OU, B, ALL, \ldots)$ & $\infty$ & $\infty$   \\
			$(OU, C, ALL, \ldots)$ & $\infty$ & $\infty$   \\
			$(OU, B, F, \ldots)$ & $\infty$ & $\infty$     \\
			$(OU, C, B, \ldots)$ & $\infty$ & $\infty$     \\
			$(UU, E, D, \ldots)$ & $\infty$ & $\infty$     \\
			\bottomrule
		\end{tabular}
	\end{table}
	
	The emerging Skycube is computable by E-IDEA algorithm (\cite{nedjarEmergingDataCube2013, nedjarExtractingSemanticsOLAP2011}), as long as the input format is respected, like any emerging datacube, with the particularity that, as of now, the algorithm shall be called on each measure.
	
	\section{Emerging closed Skycube}\index{Emerging closed L-Skycube}\index{Emerging closed Skycube}
	
	An emerging datacube's reduced representation's main interest, especially with the emerging closed datacube, can be directly transposed to the emerging closed Skycube.
	
	Despite being significantly useful and offering a vastly reduced representation  (\cite{nedjarEmergingCubesBorders2009}), L (for lower) borders and U (for upper) borders generates a data loss. Indeed, measure values, since aggregated, can no longer be recovered. If the decision maker find this loss acceptable, then, use of borders as a representation is quite appropriated, alongside F-IDEA algorithm.
	
	\begin{example}
		With example abridged merged relation $\texttt{Pokémon}_{Mer}^+$, table~\ref{tab:bordures_l_du_skycube_emergent_de_la_relation_pokemon} shows the $L$ border for the emerging Skycube, keeping a similar value for thresholds (for \texttt{Duration} $MinThreshold_1 = 35, MinThreshold_2 = 35$ and for \texttt{Loss} $MinThreshold_1 = 45, MinThreshold_2 = 45$). Separations presented in the merged relation's table $\texttt{Pokémon}_{Mer}^+$ are kept for practical reasons. 
	\end{example}
	
	\begin{table}[htbp]
		\caption{ $\texttt{Pokémon}_{Mer}^+$ relation's emerging Skycube's $L$ border}\label{tab:bordures_l_du_skycube_emergent_de_la_relation_pokemon}
		\centering
		\begin{tabular}{c|l}
			\toprule
			$L$ & $(ALL, ALL, B, \ldots)$ \\
			& $(ALL, C, ALL, \ldots)$ \\
			& $(OU, ALL, ALL, \ldots)$ \\
			\hline
			& $(ALL, ALL, F, \ldots)$ \\
			& $(ALL, ALL, B, \ldots)$ \\
			& $(ALL, B, ALL, \ldots)$ \\
			& $(ALL, C, ALL, \ldots)$ \\
			& $(OU, ALL, ALL, \ldots)$ \\
			& $(ALL, E, D, \ldots)$ \\
			\bottomrule
		\end{tabular}
	\end{table}
	
	Nevertheless, to handle any OLAP query, and so recover all measures' values, the emerging closed Skycube's reduced representation is best suited. As stated before, this concept is directly transposed from the closed emerging datacube and, as long as the input format is respected, IDEA algorithms can be applied as well. Precisely, this representation encompass the set of closed emerging Skycube's tuples, to which can be added any information required to assure a lossless representation, reproduced by the L border, explaining the closed emerging L-datacube transposition into closed emerging L-Skycube. In this case, $\mathbb{C}$-IDEA algorithm is applied to compute this reduced representation.
	
	\begin{example}
		With example abridged merged relation $\texttt{Pokémon}_{Mer}^+$, table~\ref{tab:tuples_fermes_emergents_de_la_relation_pokemon} shows closed emerging tuples for the emerging Skycube, keeping a similar value for thresholds (for \texttt{Duration} $MinThreshold_1 = 35, MinThreshold_2 = 35$ and for \texttt{Loss} $MinThreshold_1 = 45, MinThreshold_2 = 45$). Separations presented in the merged relation's table $\texttt{Pokémon}_{Mer}^+$ are kept for practical reasons. 
	\end{example}
	
	\begin{table}[htbp]
		\caption{$\texttt{Pokémon}_{Mer}^+$ relation's closed emerging tuples}\label{tab:tuples_fermes_emergents_de_la_relation_pokemon}
		\centering
		\begin{tabular}{l|cc}
			\toprule 
			Closed emerging tuple & $ER_{\texttt{Duration}}$ & $ER_{\texttt{Loss}}$ \\
			\midrule 
			$(OU, ALL, ALL, \ldots)$ & $2.33$ & $3$        \\
			$(OU, C, B, \ldots)$ & $1.17$ & $1.67$         \\
			\hline
			$(OU, ALL, ALL, \ldots)$ & $\infty$ & $\infty$ \\
			$(OU, B, F, \ldots)$ & $\infty$ & $\infty$     \\
			$(OU, C, B, \ldots)$ & $\infty$ & $\infty$     \\
			$(UU, E, D, \ldots)$ & $\infty$ & $\infty$     \\
			\bottomrule
		\end{tabular}
	\end{table}
	
	\begin{example}
		With example abridged merged relation $\texttt{Pokémon}_{Mer}^+$, table~\ref{tab:l_skycube_ferme_emergent} shows closed emerging L-Skycube, keeping a similar value for thresholds (for \texttt{Duration} $MinThreshold_1 = 35, MinThreshold_2 = 35$ and for \texttt{Loss} $MinThreshold_1 = 45, MinThreshold_2 = 45$). Separations presented in the merged relation's table $\texttt{Pokémon}_{Mer}^+$ are kept for practical reasons. 
	\end{example}
	
	\begin{table}[htbp]
		\caption{Closed emerging L-Skycube}\label{tab:l_skycube_ferme_emergent}
		\centering
		\begin{minipage}{\linewidth}
			\centering
			\begin{tabular}{l|cc}
				\toprule
				Closed emerging tuple\footnote{With: \begin{itemize}\item A : 121, 113, 006 (Starmie, Chansey, Charizard); \item B : 065, 103, 065 (Alakazam, Exeggutor, Alakazam); \item C : 121, 113, 080 (Starmie, Chansey, Slowbro); \item D : 065, 113, 143 (Alakazam, Chansey, Snorlax); \item E : 065, 040, 065 (Alakazam, Wigglytuff, Alakazam); \item F : 121, 113, 121 (Starmie, Chansey, Starmie).\end{itemize}} & $ER_{\texttt{Duration}}$ & $ER_{\texttt{Loss}}$ \\
				\midrule
				$(ALL, ALL, B, \ldots)$ & $1.17$ & $1.67$      \\
				$(ALL, C, ALL, \ldots)$ & $1.17$ & $1.67$      \\
				$(OU, ALL, ALL, \ldots)$ & $2.33$ & $3$        \\
				$(OU, C, B, \ldots)$ & $1.17$ & $1.67$         \\
				\hline
				$(ALL, ALL, F, \ldots)$ & $\infty$ & $\infty$  \\
				$(ALL, ALL, B, \ldots)$ & $\infty$ & $\infty$  \\
				$(ALL, B, ALL, \ldots)$ & $\infty$ & $\infty$  \\
				$(ALL, C, ALL, \ldots)$ & $\infty$ & $\infty$  \\
				$(OU, ALL, ALL, \ldots)$ & $\infty$ & $\infty$ \\
				$(ALL, E, D, \ldots)$ & $\infty$ & $\infty$    \\
				$(OU, B, F, \ldots)$ & $\infty$ & $\infty$     \\
				$(OU, C, B, \ldots)$ & $\infty$ & $\infty$     \\
				$(UU, E, D, \ldots)$ & $\infty$ & $\infty$     \\
				\bottomrule
			\end{tabular}
		\end{minipage}
	\end{table}
	
	Each emerging Skycube's reduced representation is even more reduced due to the Skycube partial materialization. As the Skycube computation drastically lowers the tuple count from the initial relation, further reduced representation are less effective in an emerging Skycube than in an emerging datacube.
	
	\section{Experimental evaluation}

	Compute a Emerging Skycube is a variant of a Skycube computation, and the most significant part is due to partial materialization.

	In this section, we present the validation of the Skycube partial materialization theoretical approach through performance evaluations on generated synthetic universities' databases of dimension ten and with a million tuples. We have generated realistic and representative sets of decorrelated synthetic data. These evaluations are directed towards a two-fold objective: measure the rediction in size and evaluate Skyline queries' response time.
	
	The experimental evaluation was performed on an Intel(R) Xeon(R) W-11955M CPU @ 2.60GHz 2.61 GHz, with 32Gb RAM, powered by Linux. The source code was written in Python 3.8 and interpreted with PyPy 3.9. PyPy is an alternative implementation of the Python programming language, designed to be faster and more efficient in terms of memory consumption than the standard Python implementation CPython. On average, PyPy 3.9 is 4.8 times faster than CPython 3.7. Times shown are in seconds, measured as processor processing time and assuming a default value of 8~ms per page default.
	
	\subsubsection{Size reduction}
	
	\begin{figure}[htbp]
		\centering
		\resizebox{1\textwidth}{!}{
			\begin{tikzpicture}[
				line join=bevel,
				bigrednode/.style={shape=circle, fill=brightmaroon, draw=black, line width=1pt},
				bigbluenode/.style={shape=circle, fill=skyblue, draw=black, line width=1pt}
				]
				
				%% Draws
				\draw[-stealth] (0pt, 0pt) -- (300pt, 0pt) node[anchor=north west] {Cardinality};
				\draw[-stealth] (0pt, 0pt) -- (0pt, 300pt) node[anchor=south] {Tuples count};
				
				% Axis graduations
				\foreach \y/\ytext in {0pt/$0$, 35pt/$1000000$, 70pt/$2000000$, 105pt/$3000000$, 140pt/$4000000$, 175pt/$5000000$, 210pt/$6000000$, 245pt/$7000000$, 280pt/$8000000$} {
					\draw (2pt, \y) -- (-2pt, \y) node[left] {$\ytext\strut$};
				}
				\foreach \x/\xtext in {0pt/$100$, 140pt/$1000$, 280pt/$10000$} {
					\draw (\x, 2pt) -- (\x, -2pt) node[below] {$\xtext\strut$};
				}
				
				% Lines
				\draw[skyblue, line width=2pt] (0pt, 85pt) -- (42pt, 75pt) -- (68pt, 63pt) -- (84pt, 60pt) -- (98pt, 57pt) -- (110pt, 57pt) -- (119pt, 54pt) -- (126pt, 51pt) -- (133pt, 48pt) -- (140pt, 45pt) -- (182pt, 33pt) -- (208pt, 30pt) -- (224pt, 30pt) -- (238pt, 29pt) -- (250pt, 28pt) -- (259pt, 27pt) -- (268pt, 26pt) -- (275pt, 25pt);
				\draw[brightmaroon, line width=2pt] (0pt, 131pt) -- (42pt, 166pt) -- (68pt, 181pt) -- (84pt, 193pt) -- (98pt, 199pt) -- (110pt, 205pt) -- (119pt, 208pt) -- (126pt, 211pt) -- (133pt, 217pt) -- (140pt, 220pt) -- (182pt, 232pt) -- (208pt, 235pt) -- (224pt, 241pt) -- (238pt, 242pt) -- (250pt, 244pt) -- (259pt, 245pt) -- (268pt, 247pt) -- (275pt, 250pt);
				
				%% Filldraws
				\filldraw[color=black, fill=skyblue] (0pt, 85pt) circle (2pt);
				\filldraw[color=black, fill=skyblue] (42pt, 75pt) circle (2pt);
				\filldraw[color=black, fill=skyblue] (68pt, 63pt) circle (2pt);
				\filldraw[color=black, fill=skyblue] (84pt, 60pt) circle (2pt);
				\filldraw[color=black, fill=skyblue] (98pt, 57pt) circle (2pt);
				\filldraw[color=black, fill=skyblue] (110pt, 57pt) circle (2pt);
				\filldraw[color=black, fill=skyblue] (119pt, 54pt) circle (2pt);
				\filldraw[color=black, fill=skyblue] (126pt, 51pt) circle (2pt);
				\filldraw[color=black, fill=skyblue] (133pt, 48pt) circle (2pt);
				\filldraw[color=black, fill=skyblue] (140pt, 45pt) circle (2pt);
				\filldraw[color=black, fill=skyblue] (182pt, 33pt) circle (2pt);
				\filldraw[color=black, fill=skyblue] (208pt, 30pt) circle (2pt);
				\filldraw[color=black, fill=skyblue] (224pt, 30pt) circle (2pt);
				\filldraw[color=black, fill=skyblue] (238pt, 29pt) circle (2pt);
				\filldraw[color=black, fill=skyblue] (250pt, 28pt) circle (2pt);
				\filldraw[color=black, fill=skyblue] (259pt, 27pt) circle (2pt);
				\filldraw[color=black, fill=skyblue] (268pt, 26pt) circle (2pt);
				\filldraw[color=black, fill=skyblue] (275pt, 25pt) circle (2pt);
				
				\filldraw[color=black, fill=brightmaroon] (0pt, 131pt) circle (2pt);
				\filldraw[color=black, fill=brightmaroon] (42pt, 166pt) circle (2pt);
				\filldraw[color=black, fill=brightmaroon] (68pt, 181pt) circle (2pt);
				\filldraw[color=black, fill=brightmaroon] (84pt, 193pt) circle (2pt);
				\filldraw[color=black, fill=brightmaroon] (98pt, 199pt) circle (2pt);
				\filldraw[color=black, fill=brightmaroon] (110pt, 205pt) circle (2pt);
				\filldraw[color=black, fill=brightmaroon] (119pt, 208pt) circle (2pt);
				\filldraw[color=black, fill=brightmaroon] (126pt, 211pt) circle (2pt);
				\filldraw[color=black, fill=brightmaroon] (133pt, 217pt) circle (2pt);
				\filldraw[color=black, fill=brightmaroon] (140pt, 220pt) circle (2pt);
				\filldraw[color=black, fill=brightmaroon] (182pt, 232pt) circle (2pt);
				\filldraw[color=black, fill=brightmaroon] (208pt, 235pt) circle (2pt);
				\filldraw[color=black, fill=brightmaroon] (224pt, 241pt) circle (2pt);
				\filldraw[color=black, fill=brightmaroon] (238pt, 242pt) circle (2pt);
				\filldraw[color=black, fill=brightmaroon] (250pt, 244pt) circle (2pt);
				\filldraw[color=black, fill=brightmaroon] (259pt, 245pt) circle (2pt);
				\filldraw[color=black, fill=brightmaroon] (268pt, 247pt) circle (2pt);
				\filldraw[color=black, fill=brightmaroon] (275pt, 250pt) circle (2pt);
				
				%% Caption
				\matrix [below left] at (current bounding box.north east) {
					\node [bigrednode, label=right:Skycube] {}; \\
					\node [bigbluenode, label=right:Skyline concept lattice] {}; \\
				};
			\end{tikzpicture}
		}
		\caption{Size reduction}\label{fig:reduction_de_la_taille}
	\end{figure}
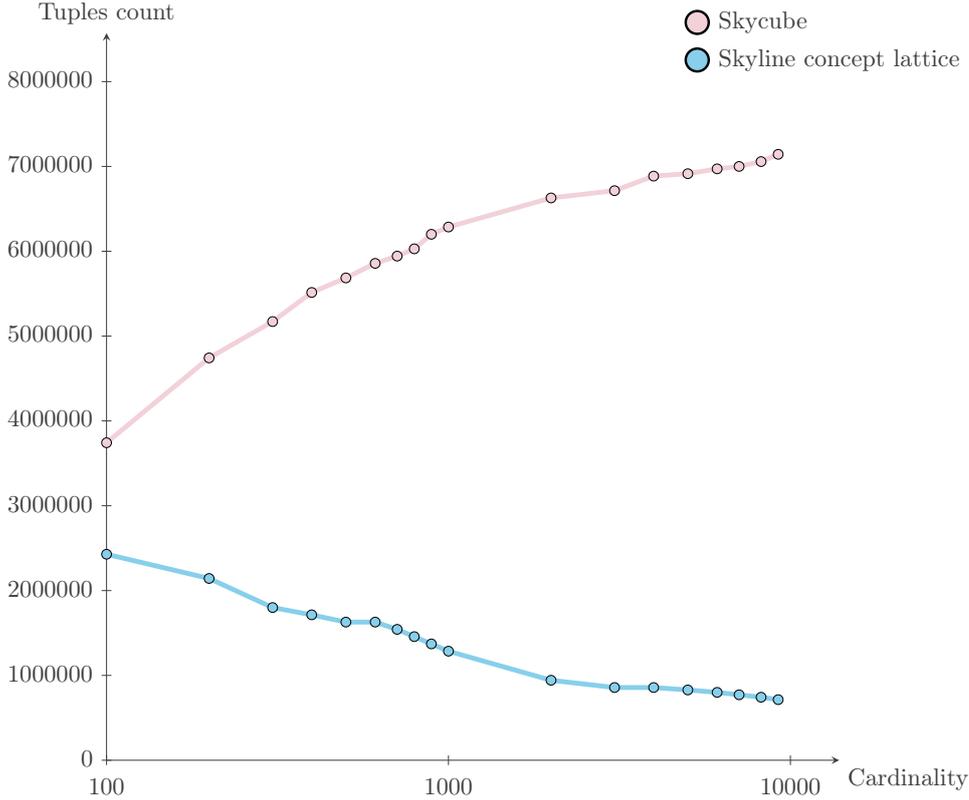
	
	In this experimentation, we measure at the same time both the Skycube's size and the Skyline concept lattice' size to compare them. Figure~\ref{fig:reduction_de_la_taille} shows the results obtained when criteria cardinality varies from 100 to 10000. As the cardinality increases, the Skycube becomes way bigger while its representation materialization's size rises marginally.
	
	\subsubsection{Queries' response time}
	
	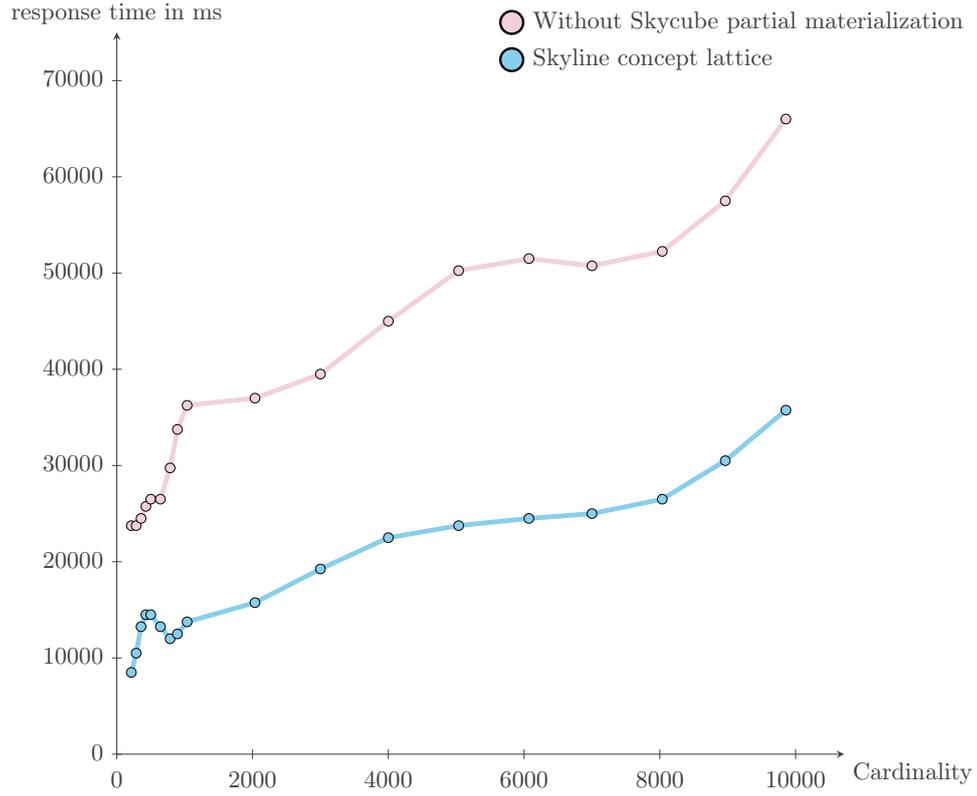
\begin{figure}[htbp]
		\centering
		\resizebox{1\textwidth}{!}{
			\begin{tikzpicture}[
				line join=bevel,
				bigrednode/.style={shape=circle, fill=brightmaroon, draw=black, line width=1pt},
				bigbluenode/.style={shape=circle, fill=skyblue, draw=black, line width=1pt}
				]
				
				%% Draws
				\draw[-stealth] (0pt, 0pt) -- (300pt, 0pt) node[anchor=north west] {Cardinality};
				\draw[-stealth] (0pt, 0pt) -- (0pt, 300pt) node[anchor=south] {response time in ms};
				
				% Axis graduations
				\foreach \y/\ytext in {0pt/$0$, 40pt/$10000$, 80pt/$20000$, 120pt/$30000$, 160pt/$40000$, 200pt/$50000$, 240pt/$60000$, 280pt/$70000$} {
					\draw (2pt, \y) -- (-2pt, \y) node[left] {$\ytext\strut$};
				}
				\foreach \x/\xtext in {0pt/$0$, 56pt/$2000$, 112pt/$4000$, 168pt/$6000$, 224pt/$8000$, 280pt/$10000$} {
					\draw (\x, 2pt) -- (\x, -2pt) node[below] {$\xtext\strut$};
				}
				
				% Lines
				\draw[skyblue, line width=2pt] (6pt, 34pt) -- (8pt, 42pt) -- (10pt, 53pt) -- (12pt, 58pt) -- (14pt, 58pt) -- (18pt, 53pt) -- (22pt, 48pt) -- (25pt, 50pt) -- (29pt, 55pt) -- (57pt, 63pt) -- (84pt, 77pt) -- (112pt, 90pt) -- (141pt, 95pt) -- (170pt, 98pt) -- (196pt, 100pt) -- (225pt, 106pt) -- (251pt, 122pt) -- (276pt, 143pt);
				\draw[brightmaroon, line width=2pt] (6pt, 95pt) -- (8pt, 95pt) -- (10pt, 98pt) -- (12pt, 103pt) -- (14pt, 106pt) -- (18pt, 106pt) -- (22pt, 119pt) -- (25pt, 135pt) -- (29pt, 145pt) -- (57pt, 148pt) -- (84pt, 158pt) -- (112pt, 180pt) -- (141pt, 201pt) -- (170pt, 206pt) -- (196pt, 203pt) -- (225pt, 209pt) -- (251pt, 230pt) -- (276pt, 264pt);
				
				%% Filldraws
				\filldraw[color=black, fill=skyblue] (6pt, 34pt) circle (2pt);
				\filldraw[color=black, fill=skyblue] (8pt, 42pt) circle (2pt);
				\filldraw[color=black, fill=skyblue] (10pt, 53pt) circle (2pt);
				\filldraw[color=black, fill=skyblue] (12pt, 58pt) circle (2pt);
				\filldraw[color=black, fill=skyblue] (14pt, 58pt) circle (2pt);
				\filldraw[color=black, fill=skyblue] (18pt, 53pt) circle (2pt);
				\filldraw[color=black, fill=skyblue] (22pt, 48pt) circle (2pt);
				\filldraw[color=black, fill=skyblue] (25pt, 50pt) circle (2pt);
				\filldraw[color=black, fill=skyblue] (29pt, 55pt) circle (2pt);
				\filldraw[color=black, fill=skyblue] (57pt, 63pt) circle (2pt);
				\filldraw[color=black, fill=skyblue] (84pt, 77pt) circle (2pt);
				\filldraw[color=black, fill=skyblue] (112pt, 90pt) circle (2pt);
				\filldraw[color=black, fill=skyblue] (141pt, 95pt) circle (2pt);
				\filldraw[color=black, fill=skyblue] (170pt, 98pt) circle (2pt);
				\filldraw[color=black, fill=skyblue] (196pt, 100pt) circle (2pt);
				\filldraw[color=black, fill=skyblue] (225pt, 106pt) circle (2pt);
				\filldraw[color=black, fill=skyblue] (251pt, 122pt) circle (2pt);
				\filldraw[color=black, fill=skyblue] (276pt, 143pt) circle (2pt);
				
				\filldraw[color=black, fill=brightmaroon] (6pt, 95pt) circle (2pt);
				\filldraw[color=black, fill=brightmaroon] (8pt, 95pt) circle (2pt);
				\filldraw[color=black, fill=brightmaroon] (10pt, 98pt) circle (2pt);
				\filldraw[color=black, fill=brightmaroon] (12pt, 103pt) circle (2pt);
				\filldraw[color=black, fill=brightmaroon] (14pt, 106pt) circle (2pt);
				\filldraw[color=black, fill=brightmaroon] (18pt, 106pt) circle (2pt);
				\filldraw[color=black, fill=brightmaroon] (22pt, 119pt) circle (2pt);
				\filldraw[color=black, fill=brightmaroon] (25pt, 135pt) circle (2pt);
				\filldraw[color=black, fill=brightmaroon] (29pt, 145pt) circle (2pt);
				\filldraw[color=black, fill=brightmaroon] (57pt, 148pt) circle (2pt);
				\filldraw[color=black, fill=brightmaroon] (84pt, 158pt) circle (2pt);
				\filldraw[color=black, fill=brightmaroon] (112pt, 180pt) circle (2pt);
				\filldraw[color=black, fill=brightmaroon] (141pt, 201pt) circle (2pt);
				\filldraw[color=black, fill=brightmaroon] (170pt, 206pt) circle (2pt);
				\filldraw[color=black, fill=brightmaroon] (196pt, 203pt) circle (2pt);
				\filldraw[color=black, fill=brightmaroon] (225pt, 209pt) circle (2pt);
				\filldraw[color=black, fill=brightmaroon] (251pt, 230pt) circle (2pt);
				\filldraw[color=black, fill=brightmaroon] (276pt, 264pt) circle (2pt);
				
				%% Caption
				\matrix [below left] at (current bounding box.north east) {
					\node [bigrednode, label=right:Without Skycube partial materialization] {}; \\
					\node [bigbluenode, label=right:Skyline concept lattice] {}; \\
				};
			\end{tikzpicture}
		}
		\caption{Queries' response time}\label{fig:temps_de_reponse_des_requetes}
	\end{figure}
	
	\begin{algorithm}[htbp]
		\caption{Sort-Filter-Skyline Algorithm(SFS)\label{algo:sfs}}
		\begin{algorithmic}[1]
			\Require~ \\
			The $r$ relation, sorted in ascending order by $\texttt{SUM}(t)$. \\
			The first tuple $t$ of $r$. \\
			The criterion set $\mathcal{C}$. \\
			\Ensure~ \\
			Skyline's tuple set $S_\mathcal{C}(r)$. \\
			\State $S_\mathcal{C}(r) := t$;
			\ForAll{$t' \in S_\mathcal{C}(r) \backslash  {t}$}
			\If{$\nexists t'' \in S_\mathcal{C}(r), t'' \prec t'$}
			\State $S_\mathcal{C}(r) := S_\mathcal{C}(r) \cup {t'}$;
			\EndIf
			\State \Return $S_\mathcal{C}(r)$
			\EndFor
		\end{algorithmic}
	\end{algorithm}
	
	In this experimentation, we want to compare a Skyline query's response time using the Sort-Filter-Skyline algorithm or SFS (algorithm~\ref{algo:sfs}), as presented by \cite{chomickiSkylinePresortingTheory2005}, first without Skycube partial materialization, and then with the Skyline concept lattice.
	
	SFS is the baseline algorithm for the experimentation. For each attribute combination from the subset lattice, a full, unoptimized run is performed.
	
	The SFS algorithm needs a monotonous or antimonotonous function (for example \texttt{SUM}) applied to the tuple set. Using this function, a topological ascending sorting is applied to $\texttt{SUM}(t)$. Let $t$ and $t'$ be two tuples of a $r$ relation with only positive values. Assuming all criteria shall be minimized. If $\texttt{SUM}(t') > \texttt{SUM}(t)$ then $t'$ does not dominate $t$.
	
	With the Skycube partial materialization, two cases are to be considered. In the simplest case, the query's result is included in the Skycube partial materialization and, consequently, the query's cost is also minimal. In the other case, the query's result must be computed from the associated closed Skycuboid using the theorem~\ref{theoreme_fondamental}. To improve the evaluation's quality and to provide meaningful results, each point of the figure~\ref{fig:temps_de_reponse_des_requetes} averages the response time of hundreds of queries with criteria cardinality increasing from 100 to 10000. Furthermore, to simulate user queries, we determine the criteria count for each query using a binomial distribution centred on the possible criteria count divided by two, then the criteria combination is randomly picked. This way, processed queries often factor several criteria, albeit not too much, so they have more meaning than if they were randomly picked. Results shows the efficiency of the Skyline concept lattice as it can quickly respond to any Skyline query and its gain factor is constant.
	
	\section{Conclusion and future work}
	
	In this paper, we presented the Skyline concept lattice, a constrained instance of a more general formal framework, the Agree concept lattice. This structure not only allows a Skycube lossless partial materialization but also an efficient computation of pruned Skycuboids. We can easily extend this materialization to $\epsilon$-Skylines, or approximate Skylines (\cite{papadiasProgressiveSkylineComputation2005}), by generalizing the definition of an Agree set by replacing strict equality by almost equality to the nearest $\epsilon$. Such an extension allows the decision maker to relax the dominance constraint when the result count is insufficient.
	
	Skycubes partially materialized that way, once merged, are used as input relation to compute the emerging Skycube or its reductions, the closed emerging Skycube, with measure data loss, and the closed emerging L-Skycube, without data loss. As long as the input format is respected, concepts are directly transposed from the emerging cube context to the emerging Skycube one. In the same vein, use of E-IDEA, F-IDEA and $\mathbb{C}$-IDEA algorithms from the IDEA platform is encouraged as a straightforward approach.
	
	Experimentations should be completed by a study on algorithm complexity for skycube computation and its variations. It will be object of future work.
	
	%---------------------------------------------------
	
	\bibliography{biblio}% common bib file
	%% if required, the content of .bbl file can be included here once bbl is generated
	%%\input sn-article.bbl
	
	%---------------------------------------------------
	
	\begin{appendices}
	
	\section{Pokémon Showdown!}
	
	Pokémon Showdown! is an open source browser-based video game. It is a popular Pokémon\footnote{Pokémon, possibly because of its japanese origins, where the plural is not marked on substantives, is, usually, written identically regardless of the plural.} fighting simulator (with millions of monthly visitors and up to 20,000 simultaneous players) offering fully animated online Pokémon battles.
	
	It is used by Smogon University\footnote{Smogon University is a reference website, and a forum, about competitive Pokémon battles offering strategic guides.} since 2012. It was one of the first simulators to implement changes made to Pokémon versions Black 2 and White 2 and to include special formats proposing original approaches of Pokémon battles by tweaking the original video game's mechanics.
	
	\subsection{Pokémon Showdown! battle simulator}
	
	Pokémon Showdown! offers its users a system of random ranked battles in a desired format and generation of Pokémon. The website also features a chat room\footnote{A chat room is a virtual written conversation in real time, via an interposed screen on the Internet. By extension, a website offering this mode of communication is also known as a chatroom}, during battles.
	
	\begin{figure}[htbp]
		\begin{minipage}{\linewidth}
			\centering
			\includegraphics[width=\linewidth]{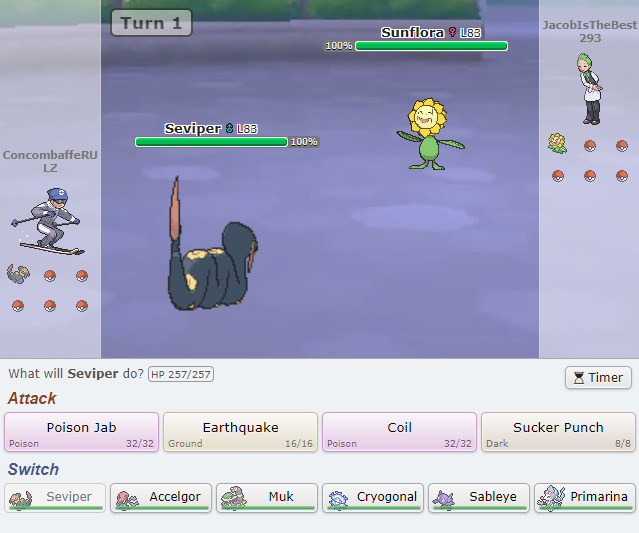}
			\caption[Pokémon Showdown! video game: battle example]{Pokémon Showdown! video game: battle example\footnote{Battle example in the video game Pokémon Showdown!}}\label{fig:jeu_videopokemon_showdown_combat}
		\end{minipage}
	\end{figure}
	
	It is possible to fight with randomly generated teams or by creating personal teams that are automatically validated according to the chosen format.
	
	\begin{figure}[htbp]
		\begin{minipage}{\linewidth}
			\centering
			\includegraphics[width=\linewidth]{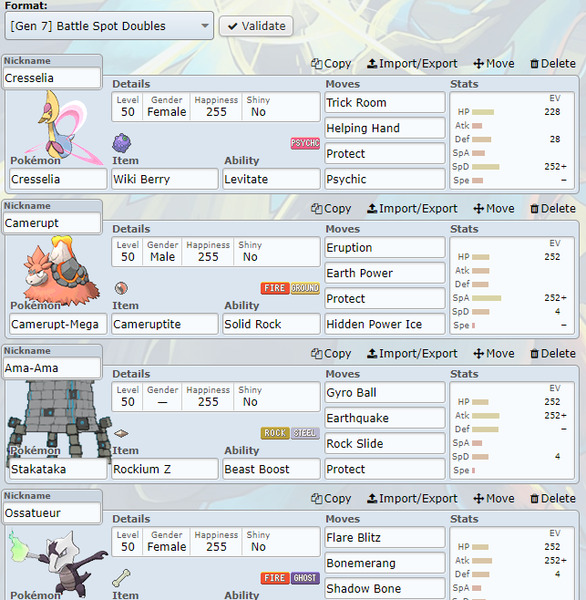}
			\caption[Pokémon Showdown! video game: team example]{Pokémon Showdown! video game: team example\footnote{Example of a team in the video game Pokémon Showdown!}}\label{fig:jeu_videopokemon_showdown_equipe}
		\end{minipage}
	\end{figure}
	
	It is possible to review our or other players' fights from an interface that allows you to consult recent fights or perform advanced searches by criteria.
	
	\subsection{Pokémon Showdown! ladder}\label{ssec:ladder_de_pokemon_showdown}
	
	Pokémon Showdown! features a point-based ranking system for each of the strategic tiers (cf. subsection~\ref{ssec:tier_strategique}), known as the ladder, or leaderboard. Players are ranked using the Elo, GXE (for glicko x-Act estimate) and Clicko-1 systems. The leaderboard for each ladder is available directly on Pokémon Showdown!
	
	\subsection{Pokédex}
	
	A kind of Pokémon encyclopedia, a Pokédex is a tool for researching Pokémon characteristics.
	
	Pokémon Showdown! offers its own Pokédex.
	
%	\begin{figure}[htbp]
%		\begin{minipage}{\linewidth}
%			\centering
%			\includegraphics[width=\linewidth]{fig/pokemon-showdown-pokedex.png}
%			\caption[Pokémon Showdown! video game: Pokédex use]{Pokémon Showdown! video game: Pokédex use\footnote{Use case of Pokédex in the video game Pokémon Showdown!}}\label{fig:jeu_videopokemon_showdown_pokedex}
%		\end{minipage}
%	\end{figure}
	
	\subsection{Individual value (IV)}\label{ssec:individual_value}\index{Individual value}\index{IV}
	
	An IV, or individual value, is a positive integer numerical value defined for each Pokémon characteristic. Most of the time, this data is not explicitly communicated in Pokémon games as such, since IV is a term in use by the community, "potential" is quoted instead. The higher the IV of a characteristic, the higher the corresponding characteristic.
	
	Let $C$ be a Pokémon characteristic, $N$ the value of the Pokémon nature ($1.1$, $0.9$ or $1$), $L$ the Pokémon level, $B$ the base value of the characteristic and $EV$\footnote{An EV, or effort value, or effort point, is a point a Pokémon can gain, which can be used to increase one of its characteristics.} the Pokémon effort points:
	\begin{itemize}
		\item here is the IV computing formula for the characteristic, other than HP\footnote{In Pokémon Showdown!, other Pokémon games and most of games, HP (for hit points) is used as a health-o-meter, or life-o-meter, of a character or for an item's toughness.}:
		\begin{equation*}
			IV = (\frac{C}{N} - 5) \times \frac{100}{L} - \frac{EV}{4} - 2 \times B
		\end{equation*}
		
		\item here is the IV computing formula for the HP characteristic:
		\begin{equation*}
			IV = \frac{100 \times (C - L - 10)}{L} - \frac{EV}{4} - 2 \times B
		\end{equation*}
	\end{itemize}
	
	\subsubsection{Strategic tier}\label{ssec:tier_strategique}\index{Strategic tier}
	
	\paragraph{Strategic dominance}\label{para:dominance_strategique}
	
	Strategic dominance, commonly called simply dominance, occurs when one strategy is better than another for one player, regardless of another player's strategy.
	
	\begin{definition}\index{Strategic dominance}
		Let $s_i$ and $s'_i$ be two strategies for a player $i$, and let $S_{-i}$ be the set of all possible strategies for other players:
		\begin{itemize}
			\item $s_i$ dominate strictly $s'_i$ if $\forall s_{-i} \in S_{-i}, u_i(s_i, s_{-i}) > u_i(s'_i, s_{-i})$;
			\item $s_i$ dominate weakly $s'_i$ if $\forall s_{-i} \in S_{-i}, u_i(s_i, s_{-i}) \geq u_i(s'_i, s_{-i})$.
		\end{itemize}
	\end{definition}
	
	\subsubsection{Game balancing}\label{sssec:equilibrage}
	
	Game balancing is a design phase about the tweaking of the gameplay to create an optimal player experience. This involves adjusting different elements of the game, such as game mechanics, rules and user interface elements, to ensure that the game is fun and engaging. This step is particularly important in video games, where maintaining the right balance is essential to ensure a satisfying gaming experience for players.
	
	To be optimal, a game must be fair to all players. This means that no one player should have an advantage over the others. To maintain balance, it is possible to create a symmetrical game by offering each player the same opportunities and resources. This ensures that no one has an unfair advantage over the others. However, many video games are inherently asymmetrical, which can lead to dominance. It comes down to balancing them, which is a recurring issue at the heart of video game designers' concerns. Balancing a game can be notably touchy, as each game is unique, with many different elements that need to be adjusted to create a balanced gaming experience.
	
	\subsubsection{Strategic tier in a video game}\label{sssec:tier_strategique_dans_un_jeu_video}
	
	A strategic tier is an unofficial categorization within a video game, based on strategic dominance, according to the performance or victory rates of characters, character teams or strategies employed.
	
	While unofficial and resulting from a video game's unbalanced nature, strategic tiers are fundamental to players and aim, paradoxically, to improve unbalances by grouping together characters, character teams or strategies considered "fair" by the player community or video game development teams.
	
	\subsubsection{Pokémon Showdown! strategic tiers}\label{sssec:tier_strategique_de_pokemon_showdown}
	
	In Pokémon Showdown!, a strategic tier is a hierarchical classification meant to sort Pokémon according to their efficiency, or more often, their IV (cf. subsection~\ref{ssec:individual_value}).
	
	In addition to the strategic tiers specific to Pokémon Showdown!, The Pokémon Company, managing the whole Pokémon license, has established its own strategic tiers, the Video Game Championship (VGC), which is used for qualifying tournaments for the annual Pokémon World Championships and other potential e-sport tournaments.
	
	In addition, the Limbo strategic tier is often used. This is a transitional strategic tier, introduced with the launch of a new generation of video games, adding new Pokémon requiring many battles in order to be empirically categorized. This strategic tier is not intended for use as a competitive format.
	
	Here are the main Pokémon Showdown! strategic tiers: 
	\begin{itemize}
		\item Anything goes: all Pokémon are allowed;
		\item Uber (U): Pokémon with a power or capacities considered as too unbalanced (banished from most competitions);
		\item Over Used (OU): the most used Pokémon throughout competitions;
		\item Under Used (UU): Pokémon juged too weak against OU Pokémons while being hardcore for most players;
		\item Rarely Used (RU): or Little Used (LU);
		\item Never Used (NU): someone played that, once;
		\item PU: Never Used (NU) of the Never Used (NU);
		\item Not Fully Evolved (NFE): Pokémon that have not reached their final evolution;
		\item Little Cup (LC): level 5 and first evolution stage (with evolution possibility) Pokémon only.
	\end{itemize}
	
%	\begin{figure}[htbp]
%		\begin{minipage}{\linewidth}
%			\centering
%			\includegraphics[width=\linewidth]{fig/pokemon-showdown-application.png}
%			\caption[Pokémon Showdown! video game: desktop app]{Pokémon Showdown! video game: desktop app\footnote{The desktop application for the video game Pokémon Showdown!}}\label{fig:jeu_videopokemon_showdown_application}
%		\end{minipage}
%	\end{figure}
	
	\section{Kaggle}
	
	Kaggle is a website, founded in 2010 and acquired by Google in 2017, offering machine learning\footnote{Machine learning is a field of study in artificial intelligence that uses mathematical and statistical approaches to "learning" computer systems (in other words, improving their algorithmic performance to meet needs without having been specifically developed to do so), usually from large datasets.} competitions in data science and graciously making datasets available to researchers to carry out data mining or machine learning projects.
	
	To complete the information required for our use case, and in particular to compute the \texttt{Rarity} (of the team's appearance sequence), which is one of the criteria of our use case, we relied on the Pokemon All Status Data (Gen1 to 9) resource provided on Kaggle by Takamasa Kato, a data science researcher.

	\end{appendices}
	
\end{document}